\newtheorem{theorem}{Theorem}
\newcommand{\rmnum}[1]{\romannumeral #1}
\newcommand{\Rmnum}[1]{\expandafter\@slowromancap\romannumeral #1@}
\def\BibTeX{{\rm B\kern-.05em{\sc i\kern-.025em b}\kern-.08em
    T\kern-.1667em\lower.7ex\hbox{E}\kern-.125emX}}
\begin{document}
\title{Blockchain-based Pseudonym Management for Vehicle Twin Migrations in Vehicular Edge Metaverse}

\author{Jiawen Kang, Xiaofeng Luo, Jiangtian Nie, Tianhao Wu, Haibo Zhou, Yonghua Wang*, \\Dusit Niyato, \textit{Fellow, IEEE}, Shiwen Mao, \textit{Fellow, IEEE}, and Shengli Xie, \textit{Fellow, IEEE}
\thanks{
J. Kang, X. Luo, T. Wu, Y. Wang, and S. Xie are with the School of Automation, Guangdong University of Technology, China (e-mail: kavinkang@gdut.edu.cn; gdutxiaofengluo@163.com; wutianhao32@163.com; wangyonghua@gdut.edu.cn; shlxie@gdut.edu.cn).

% Z. Xiong is with the Pillar of Information Systems Technology and Design, Singapore University of Technology and Design, Singapore (e-mail: zehui\_xiong@sutd.edu.sg).
J. Nie and D. Niyato are with the School of Computer Science and Engineering, Nanyang Technological University, Singapore (e-mail: dniyato@ntu.edu.sg; jnie001@e.ntu.edu.sg).

H. Zhou is with the School of Electronic Science and Engineering, Nanjing University, China (e-mail: haibozhou@nju.edu.cn).

S. Mao is with the Department of Electrical and Computer Engineering, Auburn University, USA (e-mail: smao@ieee.org).

(\textit{*Corresponding author: Yonghua Wang})
}
}

\maketitle

\begin{abstract}
Driven by the great advances in metaverse and edge computing technologies, vehicular edge metaverses are expected to disrupt the current paradigm of intelligent transportation systems. As highly computerized avatars of Vehicular Metaverse Users (VMUs), the Vehicle Twins (VTs) deployed in edge servers can provide valuable metaverse services to improve driving safety and on-board satisfaction for their VMUs throughout journeys. To maintain uninterrupted metaverse experiences, VTs must be migrated among edge servers following the movements of vehicles. This can raise concerns about privacy breaches during the dynamic communications among vehicular edge metaverses. To address these concerns and safeguard location privacy, pseudonyms as temporary identifiers can be leveraged by both VMUs and VTs to realize anonymous communications in the physical space and virtual spaces. However, existing pseudonym management methods fall short in meeting the extensive pseudonym demands in vehicular edge metaverses, thus dramatically diminishing the performance of privacy preservation. To this end, we present a cross-metaverse empowered dual pseudonym management framework. We utilize cross-chain technology to enhance management efficiency and data security for pseudonyms. Furthermore, we propose a metric to assess the privacy level and employ a Multi-Agent Deep Reinforcement Learning (MADRL) approach to obtain an optimal pseudonym generating strategy. Numerical results demonstrate that our proposed schemes are high-efficiency and cost-effective, showcasing their promising applications in vehicular edge metaverses.
\end{abstract}

\begin{IEEEkeywords}
Vehicular metaverse, cross-chain, twin migration, pseudonym management, deep reinforcement learning.
\end{IEEEkeywords}

\IEEEpeerreviewmaketitle

\section{Introduction}
The fast evolution of Internet of Things (IoT) systems has paved the way for the novel paradigm of metaverse, which is considered a creative application of Beyond 5G (B5G) networks to meet people's growing demands for hyper spatio-temporal and surreal digital services in the near future\cite{wang2022metaversesurvey,xu2023edge}. By integrating the technologies of edge computing and intelligent transportation systems, metaverses can further transition into a distinct paradigm called vehicular edge metaverses~\cite{liu2023reputation}. Functioning as surreal realms that merge virtual spaces with the physical space at the network edge, vehicular edge metaverses can offer a range of remarkable metaverse services such as Augmented Reality (AR) navigation, to Vehicular Metaverse Users (VMUs) (i.e., drivers and passengers within vehicles) with lower latency and higher fidelity~\cite{zhang2023learning}. These services can significantly increase VMUs' driving safety and on-board satisfaction throughout their journey\cite{luo2023privacy}. Vehicle Twins (VTs) as specific Al agents\cite{lifeifeicvpr2024}, are one of the core components of delivering metaverse services, which cover the entire life cycles of the vehicle and VMUs in vehicular edge metaverses. Embedded with versatile multimodal Large Language Models (LLMs)~\cite{cuiWACV2024}, the VTs can process multimodal sensory inputs from their VMUs (e.g., gestures and speech) and the vehicle (e.g., sensing data collected by LiDAR and cameras) to enhance environmental perception and understanding. The VTs in virtual spaces are capable of continuously updating themselves through interacting with other VTs and their associated VMUs, thereby providing customized services back to the VMUs in the physical space~\cite{zhang2023learning}.

Owing to the inherent limitations in computing and storage resources of vehicles, the memory and computation-intensive tasks of maintaining VTs and running LLMs should be offloaded to edge servers along the roadside, such as base stations and RoadSide Units (RSUs)\cite{zhang2023learning}. However, given the restricted communication coverage of edge servers, VTs often necessitate migrations among edge servers along the route of their associated VMUs to provide uninterrupted metaverse services~\cite{luo2023privacy}. Therefore, a substantial volume of communication takes place within vehicular edge metaverses. On one hand, vehicles with VMUs must broadcast safety messages including their current location information to nearby vehicles and edge servers, thereby enhancing mutual awareness of surrounding traffic conditions and improving driving safety~\cite{petit2014pseudonym,kang2017p3}. On the other hand, VTs deployed in edge servers should connect with their VMUs in the physical space for real-time data synchronization and metaverse service provisioning, and interact with other VTs in virtual spaces for global information acquisition during VT migrations\cite{zhang2023learning,luo2023privacy}. These communication processes pose a potential risk of privacy leakages\cite{wang2023DTsurvey}, as malicious attackers could exploit the background information (e.g., location) behind the messages to infer sensitive data, establishing mapping relationships between the identities of VMUs and VTs for constant tracking\cite{luo2023privacy}. Fortunately, as temporary authorized identifiers issued by trusted entities, pseudonyms offer a credible solution for identity anonymization by hiding the true identities of both VMUs and VTs\cite{xu2021efficient}. Through synchronous pseudonym changes, the VMUs and the VTs can increase their privacy levels collectively\cite{luo2023privacy}.

    Although employing pseudonyms can improve privacy protection, incorporating the pseudonym scheme into vehicular edge metaverses remains several challenges that must be addressed before practical implementation. These challenges include: 1) On account of the large-scale use of VMU and VT pseudonyms in vehicular edge metaverses, the issue of pseudonym management becomes considerably more intractable. Traditionally, the Trusted Authority (TA) in the cloud layer is responsible for generating, distributing, and revoking pseudonyms throughout the vehicular networks\cite{petit2014pseudonym}. However, this approach may incur an unprecedentedly overwhelming management overhead in the vehicular edge metaverses. 2) Since pseudonyms are typically stored in centralized storage devices within vehicular networks, the vulnerability of sensitive identity privacy to external violations is heightened. Unauthorized access by malicious attackers to these devices could result in the exposure of all identity information in the metaverse system, leading to severe privacy breaches\cite{wang2023DTsurvey,fang2021lightweight}. 3) To maximize utility, both VMUs and VTs need to know where and when to change pseudonyms is better. However, there still lacks a generalized metric to measure the level of privacy protection after pseudonym changes, which significantly hinders the application of privacy-preserving pseudonym schemes in vehicular edge metaverses.

With the above motivation, we resort to blockchain-based pseudonym management for VT migrations in vehicular edge metaverses in this paper. The major contributions of this paper are summarized as follows:
\begin{itemize}
    \item We design a novel cross-metaverse framework for vehicular edge metaverses, with its hierarchical architecture enabling pseudonym management and metaverse service provisioning in an efficient way. The global metaverse consists of multiple local metaverses collaborating to complete dual pseudonym management, thereby ensuring privacy protection for both VMUs and VTs.
    \item To ensure the pseudonym unforgeability and metaverse robustness, we utilize the cross-chain technology combining the notary mechanism to facilitate decentralized and secure pseudonym distribution and revocation during VT migrations in vehicular edge metaverses.
    \item Furthermore, we propose a new metric named Degree of Privacy Entropy (DoPE), to quantify the level of privacy protection after pseudonym changes for the VMUs and VTs. Based on DoPE and inventory theory, we formulate the optimization problem of pseudonym generation within the entire metaverse system.
    \item Given the variability of pseudonym demands as multiple VMUs and VTs dynamically request pseudonyms within different local metaverses, we employ an MADRL algorithm to derive the optimal pseudonym generating strategy in vehicular edge metaverses.
\end{itemize}

The rest of this paper is organized as follows. We first review the related literature in Section \ref{Related Work}. In Section \ref{System Model}, we examine the components and security requirements in vehicular edge metaverses. Then, the details of our cross-metaverse framework are presented in Section \ref{V2DPS}. Following that, the problem of pseudonym generation incorporating the DoPE metric and inventory theory is formulated in Section \ref{Problem formulation and Solution}. To address the problem, we leverage an MADRL algorithm based on edge learning technology in Section \ref{Edge_MADRL}. The performance evaluations including security analyses and numerical results of our proposed framework are performed in Section \ref{Performance Evaluation}. Finally, Section \ref{Conclusion} concludes this paper.

\section{Related Work}
\label{Related Work}
\subsection{Pseudonym Management Framework}
Some research works have been conducted to explore pseudonym management in Internet of Vehicles (IoVs). For instance, the authors in \cite{khan2022privacy} proposed a privacy-preserving identity management scheme for vehicular social networking to enhance the security of vehicles by logging and monitoring malicious pseudonyms. However, their centralized architecture incurs significant communication overhead, posing a challenge for managing both VMU and VT pseudonyms in vehicular edge metaverses. To address this, the authors in \cite{kang2017p3} presented a three-layer (cloud-fog-user) architecture for pseudonym management, in which they harnessed pseudonym fogs dispersed at the network edge to reduce management overhead. Nevertheless, their approaches may lead to sensitive data disclosure or tampering, as the pseudonym fogs are vulnerable to external attackers. To resolve the problem of single point of failure, the authors in \cite{cheng2023conditional} further proposed a blockchain-assisted pseudonym management scheme for multi-domain IoVs, which involves a blockchain network jointly maintained by TA and key generation authority to store pseudonym identities and status for vehicles. However, if the single blockchain network under their scheme fails, all identity privacy in the metaverses could still be divulged.

\subsection{Cross-chain for Metaverse}
As an effective tool for data security, the cross-chain technology shows its unique advantages in enhancing the interoperability and scalability of blockchain networks\cite{huang2021survey}. For metaverse applications, the authors in \cite{wang2022metaversesurvey} proposed to use cross-chain technology to assist transaction authentication across sub-metaverses, but they did not conduct experiments to demonstrate the practicability of their idea. The authors in \cite{kangjinbo2023healthcare} presented a cross-chain empowered federated learning framework for secure data training in healthcare metaverses. More recently, the authors in \cite{li2023metaopera} proposed a cross-metaverse protocol, with the cross-chain technology serving as a key component to achieve metaverse interoperability. Supported by their proposed protocol, the users within different metaverses are capable of quickly interacting with each other. Although previous works have delved into the applications of cross-chain technology in metaverses, none of them explores its significant potential for pseudonym management, particularly in vehicular edge metaverses.

\subsection{Privacy Metric}
To better assess the effectiveness of privacy protection of pseudonym schemes, researchers have proposed various privacy metrics. For instance, the authors in \cite{liu2018location} conducted a systematic study on location privacy, which is defined by three attributes of vehicle location information, namely, identity, position, and time. However, they did not provide a mathematical definition to quantify privacy levels. Building on this, the authors in \cite{kang2017p3} utilized a metric called privacy entropy to measure the uncertainty of mapping pseudonyms to real identities from the perspective of adversaries. Nevertheless, this metric fails to capture the impact of each pseudonym change on location privacy. Therefore, the authors in \cite{liu2019uncoordinated} utilized a metric named pseudonym age to measure privacy levels, defined as the time interval between the last pseudonym change and the current one. However, this metric solely takes the time dimension into account but ignores the strength factor. As each pseudonym change yields different degrees of privacy enhancement based on real-time traffic conditions\cite{kang2017p3}, the pseudonym age does not reflect the location privacy well. Consequently, there is still a need for a generalized metric that can analytically quantify the degree of privacy protection after pseudonym changes, particularly in the context of pseudonym-based vehicular edge metaverses.

\subsection{Resource Optimization for Pseudonym Management}
For the sake of cost-effective privacy preservation at the network edge, researchers have investigated optimization problems with regard to pseudonym management. For example, the authors in \cite{artail2015pseudonym} developed an optimization algorithm to solve the pseudonym shuffling problem among roadside units in a distributed manner. Additionally, the authors in \cite{chaudhary2019pseudonym} leveraged the genetic algorithm to execute pseudonym generation, targeting the problem of location privacy preservation in vehicular ad hoc networks. Recently, the authors in \cite{luo2023privacy} conducted a case study on pseudonym distribution formulated by the VMU utility with inventory theory. However, most existing research overlooks the intricate scenario of multi-provider multi-consumer pseudonym management, thereby constraining their performances in vehicular edge metaverses. Moreover, these studies typically rely on heuristics to solve the formulated problem, which lacks practicality in real-world applications. Therefore, it is urgent to develop a practical learning-based algorithm that can obtain an optimal pseudonym generating strategy in vehicular edge metaverses.

\section{System Model}
\label{System Model}
\subsection{Network Model in Vehicular Edge Metaverses}
\begin{itemize}

    \item \textbf{Vehicle Twins (VTs):} The multimodal LLM-based VTs are one kind of powerful AI agents, which can evolve through engaging online in the vehicular edge metaverse, namely, synchronizing with their corresponding VMUs and interacting with other VTs\cite{luo2023privacy}. In this way, VTs can better perform complex tasks by Chain-of-Thought (CoT) reasoning and planning in the edge layer, making more accurate decisions to enhance VMUs’ driving safety and enrich on-board experience\cite{lifeifeicvpr2024,cuiWACV2024}. However, the VT migrations may expose sensitive data (e.g., identity privacy) during dynamic communications within vehicular edge metaverses\cite{wang2023DTsurvey}. To mitigate potential privacy breaches, VTs can employ pseudonyms to effectively mask their true identities for anonymous communications\cite{luo2023privacy}.

    \item \textbf{Vehicular Metaverse Users (VMUs):} VMUs connect to edge servers to access vehicular edge metaverses through portable immersive devices, such as Head-Mounted Displays (HMDs). To enjoy personalized services, the VMUs should continuously upload real-time sensing data collected by vehicular sensors to update their VTs\cite{zhang2023learning}. In addition, their vehicles need to periodically broadcast safety messages to increase the contextual awareness of surrounding traffic conditions\cite{kang2017p3}. By synchronously changing pseudonyms with their VTs, VMUs can evade the continuous tracking by attackers during data synchronization and safety message broadcasting, thus safeguarding location privacy throughout the journeys\cite{luo2023privacy}.

    \item \textbf{Edge Servers:} Vehicular edge metaverses are generally maintained by numerous edge servers geographically adjacent to VMUs to reduce service provisioning delays. With adequate computation, communication, and storage resources, the edge servers can perform latency-sensitive and computationally complicated tasks, such as VT simulation and visualization renderings\cite{zhang2023learning}. Moreover, the edge servers expedite pseudonym management including pseudonym issuance, storage, allocation, and revocation\cite{petit2014pseudonym,kang2017p3}. Combined with the blockchain technology, the edge servers function as miners to participate in consensus related to pseudonyms\cite{kangjinbo2023healthcare}. This integration fosters pseudonym security in vehicular edge metaverses.
    
    \item \textbf{Trusted Authority (TA):} The TA can be a government agency equipped with anti-attack hardware to effectively thwart network attacks\cite{kang2017p3}. Located in the cloud layer, the TA with ample computing resources plays a pivotal role in overseeing pseudonym management operations as well as supervising VMUs, VTs, and edge servers within the entire metaverse. By synchronizing pseudonym information from edge servers on the blockchain network, the TA can verify pseudonym identities swiftly upon receiving messages related to pseudonyms, such as pseudonym renewals or misbehavior reports\cite{cheng2023conditional}. The pseudonym management is dominated by the fully trusted TA, thereby maintaining the pseudonym sustainability and accountability in vehicular edge metaverses\cite{luo2023privacy}.

\end{itemize}

\subsection{Security Requirements}
The security threats in vehicular edge metaverses consist of purposeful behaviors and fatal attacks by malicious attackers. The attackers can be categorized into three main types: 1) Compromised VMUs, which eavesdrop basic safety messages from legitimate VMUs\cite{kang2017p3}; 2) Malicious VTs, which engage in intentional interactions with legal VTs to steal their sensitive privacy or disseminate fake news\cite{luo2023privacy}; 3) Semi-trusted edge servers, which are curious about the data synchronization between legal VMUs and VTs\cite{li2014acpn,wang2023DTsurvey}. These edge servers are susceptible to hijacking by external attackers, affecting the normal pseudonym management in vehicular edge metaverses.

\begin{figure*}[t]
\vspace{-0.5cm}
\centering{\includegraphics[width=0.8\textwidth]{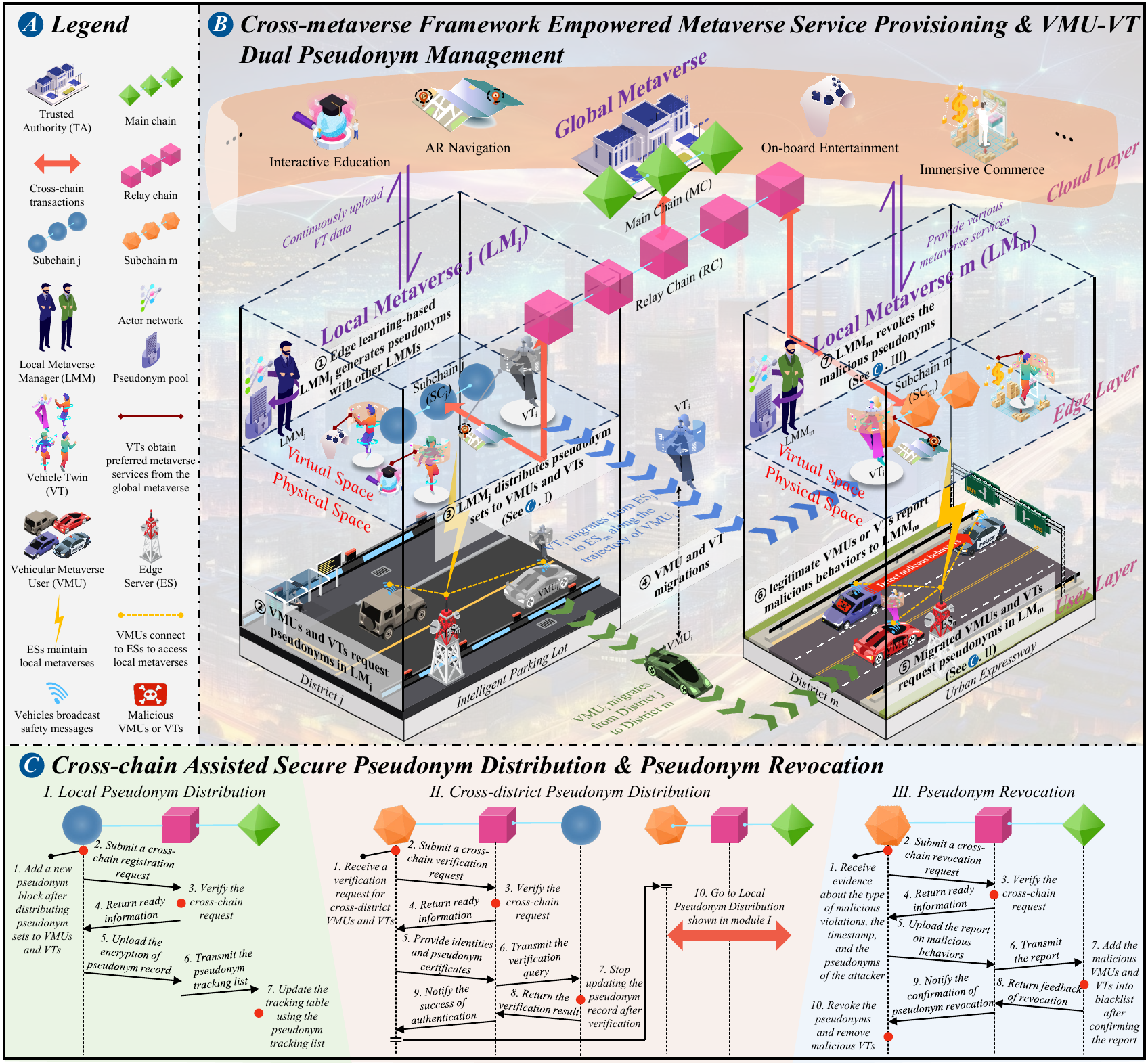}}
\caption{The cross-metaverse paradigm integrated with cross-chain technology for twin migrations in vehicular edge metaverses. \textit{Part. A} presents the element description in the vehicular edge metaverses. \textit{Part. B} depicts the cross-metaverse framework in detail. \textit{Part. C} provides the details of cross-chain transactions for pseudonym distribution and revocation in \textit{Part. B}.}\label{System figure}
\end{figure*}
Considering the existence of the aforementioned attackers, to ensure normal operation in vehicular edge metaverses, the following security requirements must be strictly satisfied:
\begin{itemize}
    \item \textbf{\emph{Anonymity:}} Anonymity is the foundation of vehicular edge metaverses, ensuring that the real identities of VMUs and their associated VTs remain undisclosed to other entities within the metaverse.
    \item \textbf{\emph{Unlinkability:}} Unlinkability ensures that colluding attackers cannot link the identities of VMUs with their associated VTs during vehicle and VT migrations, thus protecting location privacy for both VMUs and VTs.
    \item \textbf{\emph{Immutability:}} Immutability means that the sensitive pseudonym information should be guaranteed for data integrity and cannot be easily obtained or modified by external attackers.
    \item \textbf{\emph{Conditional traceability:}} Conditional traceability grants the exclusive right to trace the true identities of VMUs and VTs and revoke the pseudonym use of malicious entities only to the creditable entities.
    \item \textbf{\emph{Robustness:}} Robustness is the ability of the blockchain network in vehicular edge metaverses to withstand external attacks and ensure on-chain data protection to a certain extent.
    \item \textbf{\emph{Efficiency:}} Due to ubiquitous communications in vehicular edge metaverses, the pseudonym management scheme should be highly efficient and cost-effective to ensure affordable and sustainable privacy protection.
\end{itemize}

\section{Cross-metaverse Empowered VMU-VT Dual Pseudonym Management Framework}
\label{V2DPS} 
\subsection{Overview of Hierarchical and Decentralized Architecture} 
To satisfy the security requirements mentioned above, we design a cross-metaverse empowered VMU-VT dual pseudonym management framework, as illustrated in Fig. \ref{System figure}. The vehicular edge metaverse is regarded as a global metaverse encompassing multiple local metaverses to provide a variety of stunning metaverse services for VTs at the network edge, such as AR navigation and on-board entertainment\cite{jiayi2023attack,li2023filling}. Afterwards, the associated VMUs in the user layer can obtain global traffic information while experiencing various kinds of metaverse services. This hierarchical architecture of our proposed framework expedites the metaverse service provisioning via cross-metaverse interactions. On one hand, the VTs within local metaverses continuously upload real-time data to the global metaverse, enabling it to render and offer customized services back to local metaverses. On the other hand, cross-metaverse interactions contribute to overcoming spatial limitations among different physical districts. For instance, if a traffic accident occurs in a certain local metaverse, the VMU witnessing the accident can notify its VT, and then the VT will forward this notification to the VTs in other local metaverses. In this way, the VMUs receiving the notification from their associated VTs can choose alternative routes, thus preventing the accident from escalating further.

In addition to facilitating metaverse service provisioning, the proposed cross-metaverse hierarchical architecture also plays a vital role in pseudonym management. Based on the above contents, we assume that the global metaverse is equipped with a TA in the cloud layer, and each local metaverse is equipped with a Local Metaverse Manager (LMM) in the edge layer. Serving as regional trusted authorities, the LMMs are qualified to carry out the tasks of pseudonym management traditionally handled by the TA\cite{kang2017p3}. For instance, LMMs are capable of promptly generating and distributing pseudonyms to VMUs and VTs within local metaverses, substantially relieving the burdens of issuing, storing, transmitting, and recording pseudonyms in the cloud center.

Moreover, leveraging the blockchain technology to record pseudonym identities can ensure pseudonym confidentiality. However, in a single-chain system, the limited throughput and single-chain architecture become a constraint when handling massive pseudonymous transactions, potentially degrading the performance of the blockchain and causing irreversible damage to the metaverse\cite{kangjinbo2023healthcare}. Fortunately, the application of cross-chain technology can address the above challenges. As shown in Fig. \ref{System figure}, the hierarchical decentralized cross-chain architecture consists of a main chain, a relay chain, and multiple subchains. The main chain is maintained by the fully trusted nodes in the cloud layer (e.g., the TA), responsible for recording and verifying global pseudonyms and reports from various local metaverses. The relay chain facilitates the transmission of ciphertext, namely cross-chain requests between the main chain and subchains. Each local metaverse maintains a subchain in the edge layer, where edge servers function as miners using distributed consensus to add pseudonym blocks and LMMs serve as notaries to verify these blocks for cross-chain transactions\cite{du2021resource}.

In summary, the hierarchical cross-metaverse architecture enables efficient pseudonym management while the decentralized cross-chain architecture ensures sensitive privacy isolation. Consequently, our proposed framework effectively improves both management efficiency and data security for pseudonyms in vehicular edge metaverses.
%%%%%%%%%%%%%%%%%%%%%%%%%%%%%%%%%%%%%%%%%%%%%%%%%%%%%%%%%%
\begin{table}[t]\label{symbol}
% \vspace{-0.5cm}
	\renewcommand{\arraystretch}{1.3} %控制行高
        % \captionsetup{font=footnotesize} %控制caption字体大小
	\caption{Key symbols used in this paper.}\label{table} \centering %\tabcolsep=5pt
	\begin{tabular}{m{0.14\textwidth}<{\centering}|m{0.3\textwidth}<{\raggedright}} %两列的长度
		\Xhline{1px}		
		\textbf{Notation} & \textbf{Definition}\\	
		\Xhline{1px}
            $PID_{VMU_i}^k$ & The $k^{\mathrm{th}}$ pseudonym of $VMU_i$. $VMU_i$ requests a pseudonym set with $w$ pseudonyms, ${\{PID_{VMU_i}^k\}}_{k=1}^w={\{PID_{VMU_i}^k\}}$ \\
            \hline
            $PID_{VT_i}^l$ & The $l^{\mathrm{th}}$ pseudonym of $VT_i$. $VT_i$ requests a pseudonym set with $u$ pseudonyms, ${\{PID_{VT_i}^l\}}_{l=1}^u={\{PID_{VT_i}^l\}}$ \\
            \hline
            $PK_{VMU_i},SK_{VMU_i}$ & Public and private key pair of $VMU_i$ \\
            \hline
            $PK_{VT_i},SK_{VT_i}$ & Public and private key pair of $VT_i$\\
            \hline
            $\{x\} $  & A set with element $x$ \\
            \hline
            $i \mathop \to \limits^{(j)} k $  & Entity $i$ sends a message to entity $k$ (through entity $j$)\\
            \hline
            $i \stackrel{(n)}{\Longrightarrow}c $  & Entity $i$ adds a new block to blockchain $c$ (after the authentication of notary $n$) \\
            \hline
            $x||y$ &  Element $x$ concatenates to $y$  \\
            \hline
            ${E_{P{K_i}}}(m)$ & Encryption of message $m$ using the public key of entity $i$\\
            \hline
            $Ts$ &  Timestamp of event\\
            \Xhline{1px}
	\end{tabular}\label{symbol}
\vspace{-0.5cm}
\end{table}
%%%%%%%%%%%%%%%%%%%%%%%%%%%%%%%%%%%%%%%%%%%%%%%%%%%%%%%%%%

\subsection{Cross-chain Assisted Secure Pseudonym Management}
Here, we provide a detailed description of cross-chain assisted pseudonym management processes in vehicular edge metaverses. For convenience, we list the symbols used in our proposed framework in Table \ref{symbol}.

We adopt a lightweight Boneh-Boyen short signature scheme \cite{boneh2004short} for initial startup and key generation. When $VMU_i$ with its true identity $ID_{VMU_i}$ first joins the $j^{\mathrm{th}}$ local metaverse, it sends an initial metaverse registration request to the nearest edge server. Then, the edge server will create $VT_i$ for $VMU_i$ after verifying $ID_{VMU_i}$, and both $VMU_i$ and $VT_i$ obtain its public\//privacy key pairs and corresponding certificates (denoted as $PK_{VMU_i}, PK_{VT_i}, SK_{VMU_i}, SK_{VT_i}, Cert_{VMU_i}$, and $Cert_{VT_i}$) from the TA\cite{kang2017p3}. Afterwards, The TA notifies $LMM_j$ to distribute a set of pseudonyms $\{PID_{VMU_i}^k\}_{k=1}^w$ and $\{PID_{VT_i}^l\}_{l=1}^u$ to $VMU_i$ and $VT_i$, respectively. These pseudonyms are attached with the corresponding public\//private key pairs and certificates (denoted as $PID_{VMU_i}^k, PID_{VT_i}^l, SK_{PID_{VMU_i}^k}, SK_{PID_{VT_i}^l}, Cert_{PID_{VMU_i}^k}$, and $Cert_{PID_{VT_i}^l}$). Following that, the TA creates a tracking table on the main chain $MC$, logging the true and pseudonym identities of $VMU_i$ and $VT_i$ as well as the pseudonym issuer $LMM_j$\cite{kang2017p3}. Meanwhile, the edge server adds a pseudonym registration block onto the subchain $SC_j$. Finally, the TA allocates a tracking list \{$PK_{VMU_i}$, $Cert_{VMU_i}$, $LMM_j$, $\{PID_{VMU_i}^k\}$, $\{Cert_{PID_{VMU_i}^k}\}$\} to all local metaverses after encryption using LMMs' public keys. Here, we consider that all the LMMs are fully trusted while edge servers are semi-trusted\cite{li2014acpn}. The edge servers scattered across local metaverses link with each other via wired communications for better cooperation and mutual supervision\cite{kang2017p3}. The LMMs also supervise these edge servers and ban them from linking to the subchains for a specific period in case of detecting their misbehaviors during pseudonym management\cite{khan2022privacy}.

%%%%%%%%%%%%%%%%%%%%%%%%%%%%%%%%%%%%%%%%%%%%%%%%%%%%%%%%%%%%%%%%%%%%%%
\begin{figure}[t]
\vspace{-0.5cm}
\small
% \captionsetup{font=small} % 设置字体大小为小号
\noindent\rule{0.49\textwidth}{1pt}
\vspace{-5pt} % 调整行距的值可以根据需要进行更改
\leftline{\textbf{Protocol 1: Basic Operations of Pseudonym Management}}\\
\label{protocol1}
\noindent\rule{0.49\textwidth}{1pt}
%\begin{flushleft}
\begin{small}
\noindent  \textbf{1.} $LMM_j$: generate $G_{j}^t$ pseudonyms via the actor network trained \\
\indent \qquad \qquad \quad by \textbf{Algorithm} \ref{algorithm} and store them in the pseudonym pool \\
\noindent  \textbf{2.} $VMU_i$: broadcast safety messages with VMU pseudonym \\
\indent \qquad \qquad \quad $PID_{VMU_i}^k$ to neighboring VMUs and nearby $ES_m$ \\
\noindent  \textbf{3.} $VT_i$: interact with other VTs in virtual spaces and connect to\\
\indent \qquad \quad \enspace $VMU_i$ with VT pseudonym $PID_{VT_i}^l$ \\
\noindent  \textbf{4.} $VMU_i \enspace \& \enspace VT_i$: change pseudonym synchronously \\
\noindent \textbf{5.} \textbf{if} $VMU_i$ is (going to run out of pseudonyms issued by $LMM_j$)\\
\indent \quad \noindent \textbf{5.1} \textbf{if} $VMU_i$ is (within the coverage of the $j^{\mathrm{th}}$ local metaverse)\\
\indent \enspace \qquad~ \noindent $VMU_i$: Go to \textbf{Protocol 2}\\
\indent \quad \quad \enspace \textbf{else} \\
\indent \enspace \qquad~ \noindent $VMU_i$: Go to \textbf{Protocol 3}\\
\indent \quad \quad \enspace \textbf{endif} \\
\indent \quad \noindent \textbf{endif}\\
\noindent  \textbf{6.} $LMM_j$: Calculate the total pseudonym demand $D_{j}^t$ within the\\
\vspace{-5pt} % 调整行距的值可以根据需要进行更改
\indent \qquad \qquad \quad $j^{\mathrm{th}}$ local metaverse\\
\end{small}
\noindent\rule{0.49\textwidth}{1pt}
\vspace{-0.5cm}
\end{figure}
%%%%%%%%%%%%%%%%%%%%%%%%%%%%%%%%%%%%%%%%%%%%%%%%%%%%%%%%%%%%%%%%%%%%%%%%

\subsubsection{Basic operation}
In our proposed scheme, the distributed LMMs are trained in parallel based on edge learning technology\cite{xu2023edge} to periodically generate pseudonyms and store them in pseudonym pools for subsequent allocation within local metaverses (see step $\textcircled{1}$ in Fig. \ref{System figure}). When $VMU_i$ moves within the $j^{\mathrm{th}}$ local metaverse, the vehicle broadcasts safety messages with pseudonym $\{PID_{VMU_i}^k \}_{k=1}^w$ every 300 ms~\cite{kang2017p3}. For simplicity, here we assume that both $VMU_i$ and $VT_i$ request the same number of pseudonyms\cite{xu2021efficient}. $VMU_i$ and $VT_i$ can synchronously change their pseudonyms to prevent continuous tracking by attackers\cite{luo2023privacy}. To ensure the validity of safety messages, $VMU_i$ signs its messages with a timestamp to guarantee message freshness, in which pseudonym certificates are attached for identity verification~\cite{kang2017p3}. Before depleting all available pseudonyms, $VMU_i$ and $VT_i$ request new pseudonyms from the local metaverse where they reside. Then, the local or cross-district pseudonym distribution protocol is correspondingly executed according to the $VMU_i$'s current position. Further details can be found in \textbf{Protocol 1}.

%%%%%%%%%%%%%%%%%%%%%%%%%%%%%%%%%%%%%%%%%%%%%%%%%%%%%%%%%%%%%%%%%%%%%%%%
\begin{figure}[t]
\vspace{-0.5cm}
\small
\noindent\rule{0.49\textwidth}{1pt}
\vspace{-5pt} % 调整行距的值可以根据需要进行更改
\leftline{\textbf{Protocol 2: Local Pseudonym Distribution}}\\
\label{protocol2}
\noindent\rule{0.49\textwidth}{1pt}
%\begin{flushleft}
\begin{small}
\noindent  \textbf{1.} ${VMU_i}\mathop\to \limits^{ES_m} LMM_j$: \\
\indent\quad \quad $request_{\_VMU}= {E_{PK_{LMM_j}}}(Pseu\_request||PK_{VMU_i}||$\\ 
\indent\quad~\qquad~\qquad~~\qquad~~ $PID_{VMU_i}^n||Cert_{VMU_i}||Cert_{PID_{VMU_i}^n})$,\\
\indent \quad~ where $Pseu\_request = \{locatio{n_i}||D_{j,i}^t||Ts\}$\\
\noindent  \textbf{2.} $VT_i \to LMM_j$:\\
\indent\quad \quad $request_{\_VT}= {E_{PK_{LMM_j}}}(Pseu\_request||PK_{VT_i}||PID_{VT_i}^n||$\\ 
\indent\quad~\qquad~\qquad~\qquad~  $Cert_{VT_i}||Cert_{PID_{VT_i}^n})$\\
\noindent \textbf{3.} $LMM_j$: decrypt $request_{\_VMU}$ and $request_{\_VT}$ with $SK_{LMM_j}$\\
\indent \qquad \qquad \quad to obtain $VMU_i$ and $VT_i$'s identities for verification\\
\noindent \textbf{4.} \textbf{if} $LMM_j$ verified ($PID_{VMU_i}^n$, $PID_{VT_i}^n$, $PK_{VMU_i}$ and $PK_{VT_i}$\\
\indent \quad \quad are on $SC_j$) and ($VMU_i$ is within $ES_m$)\\
\indent \quad~ \noindent \textbf{4.1} ${LMM_j}\mathop\to \limits^{ES_m} VMU_i$: \\
\indent \quad \qquad~ $Reply_{\_VMU} = {E_{P{K_{PID_{VMU_i}^n}}}}(\{ PID_{VMU_i}^k,S{K_{PID_{VMU_i}^k}},$\\
\indent\quad~\qquad~\qquad~\qquad\qquad $Cer{t_{PID_{VMU_i}^k}}\} _{k = 1}^w)||Ts$\\
\indent \quad~ \noindent \textbf{4.2} ${LMM_j}\mathop\to \limits VT_i$: \\
\indent \quad \qquad~ $Reply_{\_VT} = {E_{P{K_{PID_{VT_i}^n}}}}(\{ PID_{VT_i}^l,S{K_{PID_{VT_i}^l}},$\\
\indent\quad~\qquad~\qquad\qquad\qquad $Cer{t_{PID_{VT_i}^l}}\} _{l = 1}^u)||Ts$\\
% \indent \quad~ \noindent \textbf{3.4} ${VMU_i}\mathop\to \limits^{ES_m} VT_i$: \\
% \indent \qquad~ establish mutual authentication between $VMU_i$ and $VT_i$ by the way in \cite{xu2021efficient}\\
\indent \quad~ \noindent \textbf{4.3} $ES_m\stackrel{LMM_j}{\Longrightarrow} SC_j$: \\
\indent \qquad~\quad  $Record = (P{K_{VMU_i}}||\{ PID_{VMU_i}^k,S{K_{PID_{VMU_i}^k}}$,\\
\indent\quad~\qquad~\qquad\qquad\enspace $Cer{t_{PID_{VMU_i}^k}}\}||Cer{t_{VMU_i}})||Ts$\\
\indent \quad~ \noindent \textbf{4.4} $SC_j \mathop\to \limits {RC}$: cross-chain registration request\\
% \indent \qquad~\quad submit a cross-chain registration request. \\
\indent \quad~ \noindent \textbf{4.5} $RC \mathop\to \limits {SC_j}$: ready information after authenticating $SC_j$\\
% \indent \qquad~\quad Return ready information after authenticating $SC_j$ \\
\indent \quad~ \noindent \textbf{4.6} ${SC_j} \mathop\to \limits ^{RC} {MC}$: \\
\indent \quad~~ \quad~ $Tracking\_list = {E_{PK{}_{TA}}}(Record||Cer{t_{LM{M_j}}})||Ts$\\
\indent \quad~ \noindent \textbf{4.7} TA: download $Tracking\_list$ from $MC$ and decrypt with \\
\indent \qquad \qquad \enspace~  $SK_{TA}$ for information update\\
\indent \quad~ \noindent \textbf{4.8} ${TA}\Longrightarrow {MC}$: \\
\indent \quad~~ \quad~ $Tracking\_table = \{ID_{VMU_i}||PK_{VMU_i}||PK_{VT_i}||$\\
\indent \qquad\qquad\qquad\qquad~\qquad~\quad\enspace $Cert_{VMU_i}||Cert_{VT_i}||Cert_{LMM_j}||$\\
\indent \qquad\qquad\qquad\qquad~\qquad~\quad\enspace $Ts||\{PID_{VMU_i}^k\}||\{PID_{VT_i}^l\}||$\\
\indent \qquad\qquad\qquad\qquad\qquad\quad\quad $\{Cer{t_{PID_{VMU_i}^k}}\}||\{Cer{t_{PID_{VT_i}^l}}\}\}$\\
\indent \quad\quad  \textbf{else} \\
\indent \qquad~  \quad $LMM_j$: do not reply\\
\vspace{-5pt} % 调整行距的值可以根据需要进行更改
\indent \quad~ \textbf{endif}\\
\end{small}
\noindent\rule{0.49\textwidth}{1pt}
\vspace{-0.5cm}
\end{figure}
%%%%%%%%%%%%%%%%%%%%%%%%%%%%%%%%%%%%%%%%%%%%%%%%%%%%%%%%%%%%%%%%%%%%%%%%
\subsubsection{Local pseudonym distribution}
Before exhausting all pseudonyms, $VMU_i$ and $VT_i$ request new pseudonyms from the nearest edge server (denoted as $ES_m$) in the $j^{\mathrm{th}}$ Local Metaverse $(LM_j)$ where they previously conducted initialization (see step $\textcircled{2}$ in Fig. \ref{System figure}). The request includes the number of requesting pseudonyms, the current location, the public key, the pseudonym being used, and corresponding certificates, all encrypted with $LMM_j$'s public key\cite{kang2017p3}. After confirming that $VMU_i$ and $VT_i$ are in $LM_j$ while verifying their identities on $SC_j$, $LMM_j$ distributes the pseudonym set $\{PID_{VMU_i}^k\}$ and $\{PID_{VMU_i}^l\}$ to $VMU_i$ and $VT_i$, respectively (see step $\textcircled{3}$ in Fig. \ref{System figure}). Subsequently, $ES_m$ generates a pseudonym registration block, which is added to $SC_j$ with a distributed consensus algorithm\cite{cheng2023conditional}. Thereafter, $SC_j$ submits a cross-chain registration request to the relay chain $RC$. After verifying by both $LMM_j$ and $RC$, the block will finally forward to the main chain (\textit{Module} $\Rmnum{1}$ of \textit{Part. C} in Fig. \ref{System figure}). More details are presented in \textbf{Protocol 2}.
%%%%%%%%%%%%%%%%%%%%%%%%%%%%%%%%%%%%%%%%%%%%%%%%%%%%%%%%%%%%%%%%%%%%%%%%
\begin{figure}[t]
\vspace{-0.5cm}
\small
\noindent\rule{0.49\textwidth}{1pt}
\vspace{-5pt} % 调整行距的值可以根据需要进行更改
\leftline{\textbf{Protocol 3: Cross-district Pseudonym Distribution}}\\
\label{protocol3}
\noindent\rule{0.49\textwidth}{1pt}
%\begin{flushleft}
\begin{small}
\noindent  \textbf{1.} ${VMU_i}\mathop\to \limits^{ES_{n}} LMM_m$: \\
\indent\quad \quad $request_{\_VMU}= {E_{PK_{LMM_m}}}(Pseu\_request||PK_{VMU_i}||$\\ 
\indent\quad~\qquad~\qquad~~\qquad~~ $PID_{VMU_i}^{w-1}||Cert_{VMU_i}||Cert_{PID_{VMU_i}^{w-1}})$\\
\noindent  \textbf{2.} $VT_i \to LMM_m$:\\
\indent\quad \quad $request_{\_VT}= {E_{PK_{LMM_m}}}(Pseu\_request||PK_{VT_i}||PID_{VT_i}^{u-1}||$\\ 
\indent\quad~\qquad~\qquad~\qquad~ $Cert_{VT_i}||Cert_{PID_{VT_i}^{u-1}})$\\
\noindent  \textbf{3.} $LMM_m$: decrypt $request_{\_VMU}$ and $request_{\_VT}$ with $SK_{LMM_j}$\\
\indent \qquad \qquad \quad \! to obtain $VMU_i$ and $VT_i$'s identities\\
\noindent  \textbf{4.} \textbf{if} $LMM_m$ verified ($PID_{VMU_i}^n$, $PID_{VT_i}^n$, $PK_{VMU_i}$ and\\
\indent \quad \quad $PK_{VT_i}$ are \textbf{not} on $SC_m$) \\
\indent \quad~ \noindent \textbf{4.1} ${SC_m}\mathop\to \limits{RC}$: cross-chain verification request\\
\indent \quad~ \noindent \textbf{4.2} ${RC}\mathop\to \limits{SC_m}$: ready information after verifying $SC_m$\\
\indent \quad~ \noindent \textbf{4.3} ${SC_m}\mathop\to \limits^{RC} {SC_j}$: \\
\indent \quad~~ \quad~ $Query = {E_{PK_{LMM_j}}} (PID_{VMU_i}^n||PID_{VT_i}^n||PK_{VMU_i}||$\\
\indent \qquad\qquad\quad~~ \quad~\enspace $PK_{VT_i}||Cer{t_{LM{M_m}}})||Ts$ \\
\indent \quad~ \noindent \textbf{4.4} ${LMM_j}$: authenticate whether $VMU_i$ and $VT_i$'s identities\\
\indent \qquad~\qquad \quad \quad \quad are on ${SC_j}$ via $record$\\
\indent \quad~ \noindent \textbf{4.5} \textbf{if} both $VMU_i$ and $VT_i$'s identities are (verified on ${SC_j})$\\
\indent \qquad \quad~ \enspace $SC_j \mathop\to SC_m$: $identity\enspace authenticated = True$\\
\indent \qquad \quad~ \enspace${LMM_j}$: stop updating the record of $VMU_i$ and $VT_i$ \\
\indent \qquad~ \enspace \textbf{else} \\
\indent \qquad \quad~ \enspace $SC_j \mathop\to SC_m$: $identity\enspace authenticated$ $=$ $False$\\
\indent \qquad \quad~ \enspace ${LMM_j}$: send this abnormal condition to TA \\
\indent \qquad~ \enspace \textbf{endif} \\
\indent \quad~ \noindent \textbf{4.6} \textbf{if} ($identity\enspace authenticated$) \\
\indent \qquad \quad \enspace~\, $LMM_m$: execute pseudonym distribution following the \\
\indent \qquad \qquad \qquad \quad \quad \, processes of \textbf{{4.1} - {4.9}} in \textbf{Protocol 2} \\
\indent \quad\quad\enspace~\,  \textbf{else} \\
\indent \qquad \quad \enspace~\, \textbf{pass}\\
\indent \quad\quad\enspace~\,  \textbf{endif} \\
\indent\quad\enspace \noindent \textbf{else} \\
\indent \qquad~  \quad $LMM_m$: do not reply\\
\vspace{-5pt} % 调整行距的值可以根据需要进行更改
\indent\enspace\enspace \noindent \textbf{endif}  \\
\end{small}
\noindent\rule{0.49\textwidth}{1pt}
\vspace{-0.5cm}
\end{figure}
%%%%%%%%%%%%%%%%%%%%%%%%%%%%%%%%%%%%%%%%%%%%%%%%%%%%%%%%%%%%%%%%%%%%%%%%

%%%%%%%%%%%%%%%%%%%%%%%%%%%%%%%%%%%%%%%%%%%%%%%%%%%%%%%%%%%%%%%%%%%%%%%%
\begin{figure}[t]
\vspace{-0.5cm}
\small
\noindent\rule{0.49\textwidth}{1pt}
\vspace{-5pt} % 调整行距的值可以根据需要进行更改
\leftline{\textbf{Protocol 4: VMU-VT Dual Pseudonym Revocation}}\\
\label{protocol4}
\noindent\rule{0.49\textwidth}{1pt}
%\begin{flushleft}
\begin{small}
\noindent \textbf{1.} \textbf{if} ${VMU_i}$ detects (${VMU_k}$ misbehaved in the physical space) \\
\indent\quad \quad \textbf{1.1} ${VMU_i}\mathop\to \limits^{ES_n} LMM_m$: \\
\indent\quad \quad \quad~ $report_{VMU} = {E_{P{K_{L{MM_j}}}}}(Cert_{PID_{VMU_i}^n}||message_{VMU})$, \\
\indent\quad \quad \quad~ where $message_{VMU} = \{Cert_{PID_{VMU_k}^m}||type||Ts\}$\\
\indent\quad \noindent \textbf{elif} ${VT_i}$ detects (${VT_k}$ misbehaved in the virtual space)\\
\indent\quad \quad \textbf{1.2} ${VT_i}\mathop\to \limits LMM_m$: \\
\indent\quad \quad \quad~ $report_{VT} = {E_{P{K_{L{MM_j}}}}}(Cert_{PID_{VT_i}^n}||message_{VT})$, \\
\indent\quad \quad \quad~ where $message_{VT} = \{Cert_{PID_{VT_k}^m}||type||Ts\}$\\
\indent\quad \noindent \textbf{endif}  \\
\noindent  \textbf{2.} $LMM_m$: decrypt $report_{VMU}$ or $report_{VT}$ with $SK_{LMM_m}$ \\
\noindent  \textbf{3.} \textbf{if} $LMM_m$ verified $message$ ($message_{VMU}$ or $message_{VT}$) \\
\indent \quad \quad and the identity of reporter ($VMU_i$ or $VT_i$) via $SC_j$ \\
\indent\quad \quad \textbf{3.1} $ES_m\stackrel{LMM_j}{\Longrightarrow} SC_j$: \\
\indent \qquad~\quad\enspace  $report = {E_{PK_{TA}}} (P{K_{VMU_k}}||\{ PID_{VMU_k}^m,S{K_{PID_{VMU_k}^m}}$\\
\indent\quad~\qquad~\qquad\qquad\enspace $Cer{t_{PID_{VMU_k}^m}}\}||Cert_{LMM_m}||Cer{t_{VMU_k}})||$\\
\indent\quad~\qquad~\qquad\qquad\enspace $message||Ts$\\
\indent\quad \quad \textbf{3.2} $SC_j \mathop\to \limits {RC}$: cross-chain revocation request \\
\indent\quad \quad \textbf{3.3} $RC \mathop\to \limits {SC_j}$: ready information after verifying $SC_j$ \\
\indent\quad \quad \textbf{3.4} $SC_j \mathop\to \limits^{RC} {MC}$: $report$ \\
% \indent\quad \quad \textbf{3.5} $TA$: verify the identity of $LMM_m$. \\
\indent\quad \quad \textbf{3.5} $TA$: decrypt $report$ with $SK_{TA}$ and validate $message$ \\
\indent \qquad \qquad \quad \enspace \, with $Tracking\_table$ on $MC$ \\
\indent\quad \quad \textbf{3.6} \textbf{if} $TA$ confirmed ($VMU_k$ or $VT_k$ misbehaved) \\
\indent \qquad \quad \quad \enspace \noindent $TA$: add $VMU_k$ and $VT_k$ into the blacklist and reveal\\
\indent \qquad \qquad \quad \quad \, the true identity $ID_{VMU_k}$ to all edge servers \\
\indent \qquad \quad \quad \enspace \noindent $LMM_m$: revoke the use of $PID_{VMU_k}$ and $PID_{VT_k}$\\
\indent \qquad \qquad \qquad \quad \quad \, and remove $VT_k$ from $LM_j$\\
\indent\quad \quad \quad\enspace \textbf{else} \\
\indent \qquad \quad \quad~ \noindent $TA$: restrict the right of $VMU_i$ and $VT_i$ to report\\
\indent \qquad \qquad \quad \quad \, violation events \\
\indent\quad \quad \quad\enspace \textbf{endif} \\
\indent\quad \noindent \textbf{else} \\
\indent \qquad~  \quad $LMM_m$: do not reply\\
\vspace{-5pt} % 调整行距的值可以根据需要进行更改
\indent\enspace \noindent \textbf{endif}  \\
\end{small}
\noindent\rule{0.49\textwidth}{1pt}
\vspace{-0.5cm}
\end{figure}
%%%%%%%%%%%%%%%%%%%%%%%%%%%%%%%%%%%%%%%%%%%%%%%%%%%%%%%%%%%%%%%%%%%%%%%%

\subsubsection{Cross-district pseudonym distribution}
Since $VMU_i$ and its vehicle continuously migrate across different districts, the $VT_i$ should be synchronously migrated among edge servers to access different local metaverses\cite{zhang2023learning}. When $VMU_i$ and $VT_i$ have migrated from $LM_j$ to $LM_m$ (see step $\textcircled{4}$ in Fig. \ref{System figure}), they need to reapply for pseudonyms from $ES_n$ using the last pseudonym $PID_{VMU_i}^{w-1}$ and $PID_{VT_i}^{u-1}$ issued by $LMM_j$ (see step $\textcircled{5}$ in Fig. \ref{System figure}). In this case, cross-district pseudonym distribution will be executed. $LMM_m$ audits $VMU_i$ and $VT_i$'s pseudonym identities through cross-chain verification (\textit{Module} $\Rmnum{2}$ of \textit{Part. C} in Fig. \ref{System figure}). After the verification, $LMM_m$ distributes pseudonyms following the processes of local pseudonym distribution. See the details in \textbf{Protocol 3}.

\subsubsection{Dual pseudonym revocation}
In the vehicular edge metaverses, both VMUs and VTs should mutually supervise their neighbors. Legitimate entities can accuse neighboring malicious individuals of engaging in misbehaviors (see step $\textcircled{6}$ in Fig. \ref{System figure}). For example, if the compromised $VMU_k$ with $PID_{VMU_k}^m$ is perpetrating misbehaviors (e.g., spread fake traffic conditions) and is detected by $VMU_i$, $VMU_i$ will record the pertinent violation information including the misbehavior type and timestamp, and report it to $LMM_m$ via $ES_n$. Upon receiving the report, $LMM_m$ first checks the validity of the report as well as the identity of $VMU_i$, and then $ES_n$ adds the report onto the subchain $SC_m$. Then, $SC_m$ will submit a cross-chain pseudonym revocation request (\textit{Module} $\Rmnum{3}$ of \textit{Part. C} in Fig. \ref{System figure}). By auditing the complete identity information on the main chain $MC$, the TA can rapidly check the identities of entities involved in the report and validate the authenticity of the report~\cite{liang2019efficient}. If the misbehaviors are confirmed, TA will reveal the true identity of $VMU_k$ to all edge servers in the metaverse, and then the $LMM_m$ will revoke the use of both VMU and VT pseudonyms and remove $VT_k$ from $LM_m$ at once (see step $\textcircled{7}$ in Fig. \ref{System figure})~\cite{kang2017p3}. Eventually, The malicious $VMU_k$ and $VT_k$ are added to the blacklist, thereby banning them from communicating with other legal entities in vehicular edge metaverses. More details are described in \textbf{Protocol 4}.

\section{Problem Formulation}
\label{Problem formulation and Solution}

\subsection{Privacy Metric and VMU Utility}
VMUs and VTs spread over local metaverses can synchronously change pseudonyms in groups with other entities to jointly increase their privacy levels\cite{luo2023privacy}. Therefore, based on the definition of Age of Information (AoI)\cite{kaul2012real,kosta2017age} used to characterize the latency in status updates, we propose a metric named Degree of Privacy Entropy (DoPE) to quantify location privacy levels for VMUs and VTs after pseudonym changes. Referring to \cite{kang2017p3}, after each pseudonym change at time $t_n^*$, the DoPE can increase to privacy entropy, calculated by
\begin{equation}
    H_n=-\log_2p_i,\quad p_i\in\left[{a}, {b}\right].
\end{equation}
Here, the continuous random variable $p_i$ represents the attackers’ successful tracking probability of $VMU_i$ after its pseudonym changes\cite{kang2017p3}, and $a$ and $b$ denote the reciprocal of the maximum and minimum number of vehicles at a social hotspot\cite{kang2017p3}, respectively.

Without loss of generality, we consider that the autocorrelation of the pseudonym change processes is small because VMUs and VTs do not want their patterns of changing pseudonyms to be discovered by attackers. Therefore, an exponential DoPE is recommended to cope with this scenario\cite{kosta2017age}. As depicted in Fig. \ref{privacy_entropy}, the DoPE $H(t)$ declines exponentially over time, while increasing instantaneously when a VMU or VT replaces its current pseudonym with a new one at $t_n^*$. Note that $H(t)$ passes through the point $(t_{i-1},H_n)$, and thus the DoPE is defined as
\begin{equation}
    H(t)=e^{-[t-t_{i-1}-\ln(1-\log_2p_i)]}-1.
\end{equation}

We use the average DoPE in an observation time interval (0, $\mathcal{T}$) to study the global effect of pseudonym changes. Referring to \cite{kaul2012real}, the time-average DoPE over (0, $\mathcal{T}$) is given as
\begin{equation}
\label{original_Ht}
   H_{\mathcal{T}}=\frac{1}{\mathcal{T}}\int_0^{\mathcal{T}} H(t)dt.
\end{equation}
To simplify, the area defined by the integral in Eq. (\ref{original_Ht}) can be decomposed into a summation of multiple irregular geometric areas of the same type (i.e., $\widetilde{Q_{1}}$ and $Q_{i}$ for $i\geq2$). More specifically, the decomposition yields
\begin{equation}
\begin{aligned}
H_{\mathcal{T}}& =\frac{\widetilde{Q_{1}}+\Sigma_{i=2}^{\mathcal{N}(\mathcal{T})}Q_{i}}{\mathcal{T}}  \\
&=\frac{\widetilde{Q_{1}}}{\mathcal{T}}+\frac{1}{\mathcal{T}}\sum_{i=2}^{\mathcal{N}(\mathcal{T})}Q_{i},
\end{aligned}
\end{equation}
where $\mathcal{N}(\mathcal{T})=\max\{n| t_n \leq \mathcal{T}\}$ denotes the number of pseudonym changes over a time interval $\mathcal{T}$. The term $\frac{\widetilde{Q_{1}}}T$ will vanish as $\mathcal{T}\to\infty$. Consequently, the time-average DoPE can be rewritten as
\begin{equation}
\label{DoPE}
    \overline{H}=\lim\limits_{{\mathcal{T}}\to\infty}H_{\mathcal{T}} =\frac{\mathcal{N}(\mathcal{T})}{\mathcal{T}}\frac{1}{\mathcal{N}(\mathcal{T})}\sum_{i=2}^{\mathcal{N}(\mathcal{T})}Q_i.
\end{equation}

\begin{figure}[t]
\centering{\includegraphics[width=0.35\textwidth]{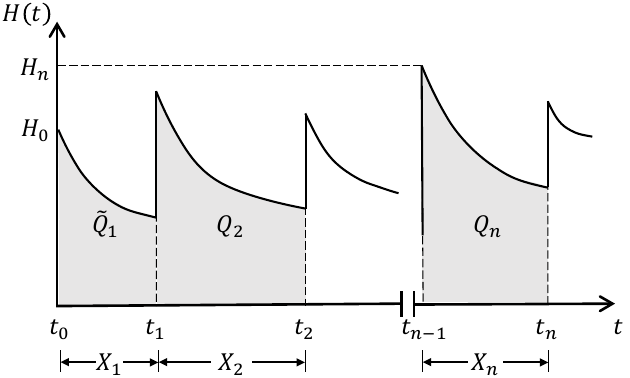}}
\caption{Exponential DoPE for location privacy quantification.}\label{privacy_entropy}
\vspace{-0.5cm}
\end{figure}

Furthermore, for $i\geq2$, the unit area can be obtained by 
\begin{equation}
\label{Qinit}
    \begin{aligned}
        Q_{i}& =\int_{t_{i-1}}^{t_i}\bigl(e^{-[t-t_{i-1}-\ln(1-\log_2p_i)]}-1\bigr)dt  \\
            % &=-e^{-[t-t_{i-1}-\ln(1-log_{2}p_{i})]}-t|_{t_{i-1}}^{t_{i}} \\
            &=-e^{-[(t_i-t_{i-1})-\ln(1-log_2p_i)]}+1-log_2p_i-(t_i-t_{i-1}).
    \end{aligned}
\end{equation}
For convenience, we define the elapsed time between the $i^{\mathrm{th}}$ and ${(i-1)}^{\mathrm{th}}$ pseudonym changes as
\begin{equation}
\label{elapsed_time}
    X_i=t_i-t_{i-1}.
\end{equation}
Therefore, by substituting Eq. (\ref{elapsed_time}) into Eq. (\ref{Qinit}), the $Q_i$ is converted into
\begin{equation}
\label{unitarea}
    Q_i=(1-e^{- X_i})(1-log_2p_i)-X_i.
\end{equation}
Defining the steady pseudonym change frequency as $\lambda=\lim\limits_{{\mathcal{T}}\to\infty}\frac{\mathcal{N}(\mathcal{T})}{\mathcal{T}}$, the time-average DoPE in Eq. (\ref{DoPE}) can be simplified as
\begin{equation}
\label{simp_DoPE}
    \overline{H}=\lim\limits_{{\mathcal{T}}\to\infty}H_{\mathcal{T}}=\lambda \mathbb{E}[Q_i],
\end{equation}
where $\mathbb{E}[\cdot]$ represents the expectation operator.

The pseudonym changes of VMUs and VTs can be modeled as a Poisson process with rate $\lambda$\cite{luo2023privacy}. Therefore, the elapsed times $X_i$ are independent and identically distributed (iid) exponential random variables with $\mathbb{E}[X]=\frac{1}{\lambda}$\cite{kosta2017age}. According to \cite{kang2017p3}, the successful tracking probability $p_i$ follows a uniform distribution, namely, $p_i{\sim}U(a,b)$. By substituting Eq. (\ref{unitarea}) into Eq. (\ref{simp_DoPE}), the time-average DoPE can be obtained by
\begin{equation}
\label{time_average_H}
\overline{H}=\frac\lambda{\lambda+1}\left(1+\frac1{\ln2}-\frac{b\log_2b-a\log_2a}{b-a}\right)-1.
\end{equation}
Guided by the time-average DoPE, the VMUs and VTs can assess their current privacy levels and decide whether to change pseudonyms. They can request pseudonyms from the local metaverse that they are in and continuously change pseudonyms for identity anonymization. Specifically, to quantify the surplus of privacy preservation through requesting and changing pseudonyms, we formulate $VMU_i$'s utility for consuming all requested pseudonyms over a time slot $t$ in the $j^{\mathrm{th}}$ local metaverse, represented as
\begin{equation}
    u_{j,i}^t=-\varepsilon+(\beta\overline{H}-\delta)R_{j,i}^{t},
\end{equation}
where $\varepsilon$, $\beta$, $\delta$, and $R_{j,i}^{t}$ represent the basic cost of requesting new pseudonyms, the profit of improving privacy protection level by changing each pseudonym under unit DoPE, the additional cost of updating routing table of changing each pseudonym during VT migrations, and the number of pseudonyms that $VMU_i$ actually acquires in the $j^{\mathrm{th}}$ local metaverse, respectively~\cite{luo2023privacy,freudiger2009non}. Therefore, for the $j^{\mathrm{th}}$ local metaverse with a set $\mathcal{I}=\{1,\ldots,i,\ldots,I\}$ of totally $I$ VMUs, the VMU total utility can be obtained by
\begin{equation}
    \mathcal{U}_j^t=\sum_{{i\in}\mathcal{I}}{u_{j,i}^t}=-I\varepsilon+(\beta\overline{H}-\delta)\mathcal{R}_j^t,
\end{equation}
where $\mathcal{R}_j^t=\min\{\mathcal{D}_j^t, G_j^{t}\}$ denotes the total number of pseudonyms that VMUs actually acquire at the beginning of time slot $t$ in the $j^{\mathrm{th}}$ local metaverse. ${D}_j^t$ is the total pseudonym demand of VMUs and $G_j^{t}$ is the number of pseudonyms generated by $LMM_j$ in the $j^{\mathrm{th}}$ local metaverse.

\subsection{Inventory Theory-based Social Welfare Formulation}
The $LMM_j$ periodically generates $G_j^{t}$ pseudonyms based on the observed past pseudonym demands at the beginning of each time slot $t$ and then distributes them to serve VMUs and VTs upon receiving pseudonym requests in the local metaverse. To better promote the privacy-preserving performance of this process within the whole metaverse, we jointly investigate the utilities of VMUs and the LMM in this paper. The Newsvendor model\cite{luo2023privacy} is a crucial component of stochastic inventory theory, which can help managers make informed decisions to maximize their surpluses. Therefore, we employ the Newsvendor model to investigate the optimization problem of pseudonym generation for the LMM. For $LMM_j$, during a time slot $t$, the pseudonym generation incurs generation costs due to computational consumption, while the pseudonym distribution yields provision profits for VMU privacy enhancement\cite{luo2023privacy}. Moreover, if $G_j^t < D_j^t$, the redundant pseudonyms must be retained in the pseudonym pool for a specific duration, incurring storage costs. Conversely, if $D_j^t < G_j^t$, $LMM_j$ faces penalties because of not satisfying the pseudonym requirements of VMUs\cite{luo2023privacy}. Specifically, the utility of $LMM_j$ is represented as
% --------------------------把全部的Rjt 换成Gjt！！！！！！！-------------------------------------------
\begin{dmath}
\label{utility}
    U_j^{t}=-g G_j^{t}+(p_0-c) \mathcal{R}_j^t-h\max\left\{\left(G_j^{t}-\mathcal{D}_j^t\right),0\right\}-r\max\left\{\left(\mathcal{D}_j^t-G_j^{t}\right),0\right\},
\end{dmath}
where $g$, $p_0$, $c$, $h$, and $r$ represent the cost of generating each pseudonym, the profit of supplying per pseudonym to VMUs, the communication overhead of distributing each pseudonym, the cost of storing each pseudonym, and the penalty for each unit of unmet pseudonym demand, respectively\cite{luo2023privacy}. Note that $p_0$ is higher than $c$ so that the $LMM_j$ can make profits through providing pseudonyms. 

By considering the utilities of both VMUs and $LMM_j$, we can formulate the social welfare to study the network utilities that reflect the performance of privacy protection in the $j^{\mathrm{th}}$ local metaverse. Covering a whole time period $\mathbf{T}$, the social welfare within the $j^{\mathrm{th}}$ metaverse can be expressed by
\begin{equation}
    {SW}_j^t=(\mathcal{U}_j^t+U_j^{t}), {t\in}\mathbf{T}.
\end{equation}
Here, $\mathbf{T}=\{1,\ldots,t,\ldots,T\}$ means that the whole time period is divided into $T$ equal time slots to recycle the pseudonym management processes\cite{zhang2023towards}. Additionally, the number of pseudonyms generated by each LMM should not exceed the maximum $G_{max}^t$ because of the computation limitations, and the total number within local metaverses cannot exceed the threshold as TA in the cloud layer can only register a specific number of pseudonym from the main chain within time slot $t$. Consequently, the optimization problem of pseudonym management can be transformed into maximizing the overall social welfare, formulated as
\begin{equation}
    \begin{split}
    \textbf{Problem 1:}\:&\sum_{{j\in}\mathcal{J}} \max\:{SW}_j^t  \\
    &\:\:\text{s.t.}\:\: {0 \leq G_j^{t} \leq G_{max}^t},\\
    &\quad\:\:\:\:\: \sum_{{j\in}\mathcal{J}}G_j^t\leq\theta t,\\
    &\quad\:\:\:\:\: 0 < c < p_0,\\
    &\quad\:\:\:\:\: 0 < \delta < \beta\overline{H}.
    \end{split}
\end{equation}
The set $\mathcal{J}=\{1,\ldots,j,\ldots,J\}$ represents that the global metaverse consists of $J$ local metaverses, while $G_{max}^t$ and $\theta$ refer to the maximum number of pseudonyms that can be generated and the upper limit rate of registering pseudonym certificates by the TA within $t$, respectively.

\begin{theorem}
    The independent social welfare in a certain local metaverse ${SW}_j^t$ can reach its maximum.
\end{theorem}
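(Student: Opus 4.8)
The plan is to treat $SW_j^t$ as a function of the single decision variable $G := G_j^t$ on the compact feasible interval $[0, G_{max}^t]$ and to show this function attains a maximum there, while simultaneously characterizing the optimizer through a newsvendor-type critical-ratio condition. Since $LMM_j$ must fix $G_j^t$ before the realized demand $\mathcal{D}_j^t$ is known, I would model $\mathcal{D}_j^t$ as a nonnegative integrable continuous random variable with cumulative distribution function $F$ and density $f$, and work with the expected social welfare $S(G) := \mathbb{E}\!\left[SW_j^t \mid G_j^t = G\right] = \mathbb{E}[\mathcal{U}_j^t] + \mathbb{E}[U_j^t]$, recalling that $SW_j^t$ is random only through $\mathcal{R}_j^t = \min\{\mathcal{D}_j^t, G\}$ and the two $\max$ terms in $U_j^t$.

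First I would expand $S(G)$ using the standard identities $\mathbb{E}[\min\{\mathcal{D}_j^t, G\}] = \int_0^G (1 - F(x))\,dx$, $\mathbb{E}[\max\{G - \mathcal{D}_j^t, 0\}] = \int_0^G F(x)\,dx$, and $\mathbb{E}[\max\{\mathcal{D}_j^t - G, 0\}] = \mathbb{E}[\mathcal{D}_j^t] - \int_0^G (1 - F(x))\,dx$. Substituting into $\mathcal{U}_j^t + U_j^t$ and collecting the terms that do not depend on $G$, $S(G)$ reduces to the affine piece $-gG$ plus a linear combination of these three integrals with coefficients assembled from $\beta\overline{H}-\delta$, $p_0-c$, $h$, and $r$. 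Differentiating (the Leibniz rule applies here, which is exactly where the continuous-demand and integrability assumptions are used) yields the clean form $S'(G) = (K + r - g) - (K + r + h)\,F(G)$, where $K := (\beta\overline{H}-\delta) + (p_0-c) > 0$ by the constraints $\delta < \beta\overline{H}$ and $c < p_0$ of Problem 1.

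Because $F$ is nondecreasing, $S'$ is nonincreasing, so $S$ is concave on $[0, G_{max}^t]$; being continuous and concave on a compact interval, it attains its maximum, which already establishes the claim. To identify the optimizer I would set $S'(G)=0$, obtaining the critical-ratio (critical-fractile) condition $F(G^*) = \frac{K + r - g}{K + r + h}$. One checks this ratio lies in $[0,1]$: the upper bound holds because $h \ge 0$ gives $-g \le h$, and the lower bound holds under the mild economic assumption $g \le K + r$ (generation cost dominated by provision surplus plus stockout penalty); if instead $g > K+r$ then $S' < 0$ throughout and the optimum is at $G = 0$. In all cases the maximizer of $S$ over the feasible set is $\min\{G^*, G_{max}^t\}$ (or $0$), so $SW_j^t$ indeed reaches its maximum.

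The main obstacle I anticipate is the nonsmoothness introduced jointly by $\mathcal{R}_j^t = \min\{\mathcal{D}_j^t, G\}$ and the two $\max$ operators: one must justify differentiating under the expectation and handle the kink at $G = \mathcal{D}_j^t$ that would appear if demand were treated as deterministic rather than random. A secondary point is making the distributional hypotheses on $\mathcal{D}_j^t$ explicit and verifying the critical ratio is a genuine probability, which is what pins down the assumption on $g$ relative to $K+r$; I would state these conditions explicitly rather than leave them implicit. Everything past the derivative formula is routine calculus.
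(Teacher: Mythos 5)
Your proposal is correct and follows essentially the same route as the paper: both take the expectation of $SW_j^t$ over the random demand $\mathcal{D}_j^t$, differentiate in $G_j^t$ to obtain $S'(G)=(K+r-g)-(K+r+h)F(G)$ with $K=(\beta\overline{H}-\delta)+(p_0-c)$, conclude concavity, and read off the newsvendor critical-fractile optimizer $F^{-1}\bigl(\tfrac{K+r-g}{K+r+h}\bigr)$. You are somewhat more careful than the paper in making the expectation explicit, restricting to the compact interval $[0,G_{max}^t]$, and verifying that the critical ratio is a genuine probability (handling the corner cases $g>K+r$ and $G^*>G_{max}^t$), but the underlying argument is the same.
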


\begin{proof}
    The utility in Eq. (\ref{utility}) can be rewritten as $U_{j}^{t}=-g G_j^{t}+(p_0-c)\min\{\mathcal{D}_j^t,G_j^{t}\}-h(G_j^{t}-\mathcal{D}_j^t)^{+}-r{(\mathcal{D}_j^t-G_j^{t})}^{+}$,
    where $x^+=\max(x,0)$. By exploiting the time-varying characteristic of $U_{j}^{t}$, we further transform the utility into
    \begin{dmath}
    U_j^{t}=(p_0-c+r-g)G_j^{t}-(p_0-c+r+h)\left(G_j^{t}-\mathcal{D}_j^t\right)^{+}-r\mathcal{D}_j^t
    =(p_0-c+r-g)G_j^{t}-(p_0-c+r+h) \int_0^{G_j^{t}}\left(G_j^{t}-\mathcal{D}_j^t\right)f\left(\mathcal{D}_j^t\right)d\mathcal{D}_j^t-r\mathcal{D}_j^t.
    \end{dmath}
    Therefore, the social welfare in the $j^{\mathrm{th}}$ local metaverse in time slot $t$ is given by
    \begin{dmath}
        {SW}_j^t=[-I\varepsilon+(p_0-c+r-g+\beta\overline{H}-\delta)]G_j^{t}-(p_0-c+r+h+\beta\overline{H}-\delta)\int_0^{G_j^{t}}\left(G_j^{t}-\mathcal{D}_j^t\right)f\left(\mathcal{D}_j^t\right)d\mathcal{D}_j^t-r\mathcal{D}_j^t.
    \end{dmath}
    By taking the first-order and second-order derivatives of ${SW}_j^{t}$ with respect to $G_j^{t}$, we can obtain
    \begin{dmath}
        \frac{\partial {SW}_j^t}{\partial G_j^{t}}=(p_o-c+r-g+\beta\overline{H}-\delta) -(p_0-c+r+h+\beta\overline{H}-\delta)F(G_j^{t}),
    \end{dmath}
    \begin{equation}
        \frac{\partial^2{SW}_j^t}{\partial {G_j^{t}}^2}=-(p_0-c+r+h+\beta\overline{H}-\delta)f(G_j^{t})<0,
    \end{equation}
    where $F(\cdot)$ denotes the Cumulative Distribution Function (CDF) of $\mathcal{D}_j^t$. Notably, the first-order derivative has a unique zero point, and the second-order derivative is negative, indicating that the social welfare $SW_j^{t}$ is strictly concave. According to the Newsvendor model, the maximum social welfare in the $j^{\mathrm{th}}$ local metaverse is achieved when $\frac{\partial {SW}_j^t}{\partial G_j^{t}} =0$, namely,
    \begin{equation}
        {G_j^{t}}^*=F^{-1}(\frac{p_o-c+r-g+\beta\overline{H}-\delta}{p_o-c+r+h+\beta\overline{H}-\delta}).
    \end{equation}
    Here, $G_j^{{t}*}$ is the optimal number of pseudonyms generated by $LMM_j$. Therefore, when $LMM_j$ generates ${G_j^{t}}^*$ at time slot $t$, the independent social welfare in the $j^{\mathrm{th}}$ local metaverse can reach its maximum.
\end{proof}

%这里，要强调Eq. (15)的优化问题中的约束条件
Although the independent social welfare in each local metaverse can reach its maximum, the constraint of pseudonym generation among LMMs complicates the problem of maximizing the overall social welfare. To be specific, the TA cannot register a large number of pseudonyms in a short time, and thus the total number of generated pseudonyms of LMMs in a given time slot should not exceed the predefined maximum. Moreover, due to the variable pseudonym demands of VMUs and the uncertain communication overhead of transmitting pseudonyms within each local metaverse, determining how LMMs collaborate to achieve the optimal pseudonym generation set ${\mathbb{G}^{t}}^* = \{{G_j^{t}}^*\}$ still poses a considerable challenge.

\section{Solution: MADRL-based Pseudonym Generating Strategy}
\label{Edge_MADRL}
To address the aforementioned challenges, we model the pseudonym generation by multiple LMMs as a Partially Observable Markov Decision Process (POMDP)\cite{zhang2023disturb}. According to the properties of POMDP, we adopt an MADRL algorithm based on edge learning technology\cite{xu2023edge} to resolve this problem, of which the details are presented as follows.

\subsection{POMDP for Multi-agent Pseudonym Generation}
\subsubsection{State space} The entire period of pseudonym changes is segmented into equal time steps. At the beginning of time step $t$, the agent ${LMM}_j$ generates $G_j^{t}$ pseudonyms to satisfy the pseudonym requirements within the $j^{\mathrm{th}}$ locals metaverse. After $t$, the VMUs distributed in each local metaverse consume all requested pseudonyms to enhance their respective location privacy, and then ${LMM}_j$ re-generate pseudonyms to meet the following pseudonym demands. Since LMMs can only make decisions according to the previous observations of the environment, we formulate the pseudonym generation as a POMDP. For each $LMM_j$, we define the observation $o_j^t$ at the current decision step $t$ as a union of past $L$-step observations, which is given by
\begin{equation}
    o_j^t\triangleq\{c_j^{t-L},Q_j^{t-L},\mathcal{D}_j^{t-L},\ldots,c_j^{t-1},Q_j^{t-1},\mathcal{D}_j^{t-1}\},
\end{equation}
where $c_j^{t}$, $Q_j^{t}$, and $\mathcal{D}_j^t$ are the average communication overhead between ${LMM}_j$ and VMUs, the periodic pseudonym overproduction, and pseudonym demands of VMUs at time step $t$ in the $j^{\mathrm{th}}$ local metaverse, respectively. $Q_j^{t}=G_j^t-\mathcal{D}_j^t (t\in\{{t-L},\ldots,{t-1}\})$. Consequently, the observation space is defined as the aggregation of observations of all LMMs, denoted as $\boldsymbol{o}^t=\{o_{1}^t,\ldots,o_{j}^t,\ldots,o_{J}^t\}$.

\subsubsection{Action space} In vehicular edge metaverses, we define an action of $LMM_j$ generating pseudonyms at the beginning of $t$ within its local metaverse as $a_j^t=\{G_j^t\}$\cite{zhang2023towards}. Hence, the action space is a set containing the actions of each agent, represented by $\boldsymbol{a}^t=\{a_{1}^t,\ldots,a_{j}^t,\ldots,a_{J}^t\}$.

\subsubsection{Reward} According to the current observation state $o_j^t$, the edge learning-based $LMM_j$ selects an action $a_j^t$ to gain the reward, and then $o_j^t$ transitions to $o_j^{t+1}$\cite{xu2023edge}. The reward for pseudonym generation of each agent at time slot $t$ can be defined as $R(o_j^t,a_j^t)={SW}_j^t$.
In the vehicular edge metaverses, maximizing social welfare is the common goal of LMMs. Therefore, the reward function is the sum of LMM's reward at time slot $t$, defined as
\begin{equation}
    R(\boldsymbol{o}^t,a^t) = \begin{cases}
        \sum_{{j\in}\mathcal{J}}R(o_j^t,a_j^t), & \sum_{{j\in}\mathcal{J}}G_j^t \leq \theta t, \\\\
        0, & \mathrm{otherwise}. \hfill
    \end{cases}
\end{equation}
For convenience, we abbreviate $R(o_j^t,a_j^t)$ and $R(\boldsymbol{o}^t,a^t)$ as $R_j^t$ and $R^t$, respectively.

\subsection{Algorithm Details}
% 伪代码
\begin{algorithm}[t]  
\small
\caption{MAPPO Algorithm for Pseudonym Generation in Vehicular Edge Metaverses}\label{algorithm}
Initialize maximum episodes $E$, maximum time steps $T$ in an episode, maximum epochs $K$, and batch size $B$ \;
Initialize actor $\pi_{{\theta}_j}$, $\pi_{{\theta}_j}^{old}$ and critic $Q_{{\omega}_j}$, $Q_{\overline{\omega}_j}$ \;
\For{Episode $e=1,2,\ldots,E $}
{       
    Reset Pseudonym Generation Environment $PGEnv$ and replay buffer $\mathcal{B F}$\;
    \For{Time step $t=1,2,\ldots,T$}
    {   
        Each agent $LMM_j$ observes $o_j^t$ and selects an action $a_j^t$ according to its current actor policy $\pi_{{\theta}_j}^{old}$\;
        Get the reward $r_t$ and update $\boldsymbol{o}^t$ into $\boldsymbol{o}^{t+1}$\;
    }
    Each agent $LMM_j$ obtains a trajectory $\tau_j=\{o_j^t,a_j^t,R_j^t,o_j^{t+1}\}_{t=1}^{T}$\;
    Compute $\{\hat Q_j(\boldsymbol{o}^t,\boldsymbol{a}^t)\}_{t=1}^{T}$ according to Eq. (\ref{qhat})\;
    Compute advantages $\{A_j(\boldsymbol{o}^t,\boldsymbol{a}^t)\}_{t=1}^{T}$ according to Eq. (\ref{advantage_function})\;
    Store data $\{\{o_j^t,a_j^t,\hat Q_j[\boldsymbol{o}^t,\boldsymbol{a}^t],A_j[\boldsymbol{o}^t,\boldsymbol{a}^t]\}_{j=1}^{J} \}_{t=1}^{T}$ into replay buffer $\mathcal{B F}$\;
    \For{Epoch $k=1,2,\ldots,K$}
    {
        Shuffle the data order in $\mathcal{B F}$\;
        \For {$l=1,2,\ldots,\frac{T}{B}-1$}
            {
                Sample a mini-batch of data $d_l$ with a size $B$ from $\mathcal{B F}$, where $d_l=\{[o_j^m,a_j^m,\hat Q_j[\boldsymbol{o}^m,\boldsymbol{a}^m],A_j[\boldsymbol{o}^m,\boldsymbol{a}^m]_{j=1}^J\}_{m=1+Bl}^{B(l+1)}$ %batchsize为每次更新时从环境中收集的样本数量。每次采样，从完整的一个周期T内，取出l段截断长度为batchsize=B的经验用于策略更新
                \For {$j=1,2,\ldots,J$}
                {
                    $\Delta\theta_j=\frac{1}{B}\sum_{m=1}^{B}\{\nabla_{\theta_j}G[r_m(\theta_j),$\\
                    $A_{j}(\boldsymbol{o}_m,\boldsymbol{a}_m)]\}$ \\
                    $\Delta\omega_{j}=\frac1B\sum_{m=1}^{B}\{\nabla_{\omega_{j}}(\hat{Q}_{j}(\boldsymbol{o}_m,\boldsymbol{a}_{m})$ \\
                    $-Q_{\omega_j}(\boldsymbol{o}_m,\boldsymbol{a}_m))^2\}$ \\
                    Apply gradient ascent to update actor parameter $\theta_j$ using $\Delta\theta_j$\;
                    Apply gradient descent to update critic parameter $\omega_j$ using $\Delta\omega_j$\;
                }
            }
    }
    Update $\theta_j^{(old)}\leftarrow\theta_j$ and $\overline{\omega}_j\leftarrow\omega_j$ for each $LMM_j$\;
}
\end{algorithm}
We adopt the actor-critic framework and employ the Multi-agent Proximal Policy Optimization (MAPPO) approach with centralized training and decentralized execution for policy iteration\cite{junlong2023MADRL,du2023maddpg}. The hyperparameters of $LMM_j$'s policy ${\pi}_{\theta_j}$ and collective policy ${\pi}_{\boldsymbol{\theta}}$ are denoted as ${\theta_j}$ and ${\boldsymbol{\theta}}$, respectively. Here, $\boldsymbol{\theta}=\{\theta^1,\ldots,\theta^N\}$. Therefore, the objective of MAPPO can be formulated as
\begin{equation}
\label{20}
    \max_{\boldsymbol{\theta}}{\mathbb{E}}_{\pi_{\boldsymbol{\theta}}^{old}}\bigg\{\sum_{{j\in}\mathcal{J}}G[r(\theta_j),A_{\pi_{\boldsymbol{\theta}}^{old}}(\boldsymbol{o},\boldsymbol{a})]\bigg\},
\end{equation}
where the current policy of LMMs ${\pi_{\boldsymbol{\theta}}^{old}}$ is a differentiable function with hyperparameter $\boldsymbol{\theta}^{old}$ and $A_{\pi_{\boldsymbol{\theta}}^{old}}(\boldsymbol{o}^t,\boldsymbol{a}^t)$ is the advantage function\cite{junlong2023MADRL}. To achieve feasible implementation of algorithm training, the objective function can be calculated by the expectation over a batch of samples. Specifically, the policy network of $LMM_j$ is updated through gradient ascent~\cite{zhang2023learning}, expressed as
\begin{equation}
\label{gradient}
    \Delta\theta_j=\nabla_{\theta_j}\hat{\mathbb{E}}_t \bigg\{G[r_t(\theta_j),A_j(\boldsymbol{o}_t,\boldsymbol{a}_t)] \bigg\}.
\end{equation}
Here, ${\mathbb{E}}_t \{ \cdot \}$ is the sample average and $r_{t}(\theta_j)=  \frac {\pi _ {\theta_j }(\boldsymbol{a}_ {t}|\boldsymbol{o}_ {t})}{\pi _ {\theta_j^{old} }(\boldsymbol{a}_ {t}|\boldsymbol{o}_ {t})}$ is the importance ratio\cite{zhang2023learning}. $A_j(\boldsymbol{o}^t,\boldsymbol{a}^t)$ is the estimation of $A_{\pi_{\boldsymbol{\theta}}^{old}}(\boldsymbol{o}^t,\boldsymbol{a}^t)$, which can be calculated based on the Generalized Advantage Estimation (GAE) method\cite{zhang2023learning}. We also leverage a clip mechanism\cite{junlong2023MADRL} to constrain the policy updates in Eq. (\ref{gradient}), which can be further expressed by
\begin{equation}
\begin{split}
    G[r_t(\theta_j),A_j(\boldsymbol{o}_t,\boldsymbol{a}_t)]=\min \bigg\{&(r_{t}(\theta_{j}) A_j(\boldsymbol{o}^t,\boldsymbol{a}^t),\\ &g_{clip}[\epsilon, A_j(\boldsymbol{o}^t,\boldsymbol{a}^t)] \bigg\},
\end{split}
\end{equation}
where 
\begin{equation}
    g_{clip}(\epsilon, A)=  \begin{cases}1-\epsilon,\:A<0,\\
1+\epsilon,\:A\geq 0.\end{cases}
\end{equation}
Note that $\epsilon\in[0,1]$ is the clipping parameter\cite{junlong2023MADRL}.

In this paper, we calculate the advantage estimation in the form of state-action value function\cite{junlong2023MADRL}. To tackle the problem of not knowing the impact that $LMM_j$ generates a certain number of pseudonyms on the total reward (i.e., the multi-agent credit assign problem), we also use a counterfactual baseline \cite{foerster2018counterfactual} to calculate the estimates, given by
\begin{equation}
\label{advantage_function}
    A_j(\boldsymbol{o}^t,\boldsymbol{a}^t)=\hat{Q}_j(\boldsymbol{o}^t,\boldsymbol{a}^t
    )-b(\boldsymbol{o}^t,\boldsymbol{a}_{-j}^t).
\end{equation}
Here, $b(\boldsymbol{o}^t,\boldsymbol{a}_{-j}^{t})=\sum_{a_{j}^{t}}{\pi_{\theta_j}^{old}}(a_{j}^{t}|o_{j}^{t})Q_{\omega_{j}}[\boldsymbol{o}^t,(\boldsymbol{a}_{-j}^{t},a_{j}^{t})]$ is the counterfactual baseline, and $\boldsymbol{a}_{-j}^{t}$ is the joint action of other agents except $LMM_j$\cite{junlong2023MADRL}. $\hat Q_j(\boldsymbol{o}^t,\boldsymbol{a}^t)$ is the estimation of state-action value function, calculated by
\begin{equation}
\label{qhat}
    \hat Q_j(\boldsymbol{o}^t,\boldsymbol{a}^t)=Q_{\bar\omega_j}(\boldsymbol{o}^t,\boldsymbol{a}^t)\\+\delta_t+(\gamma\lambda_{gae})\delta_{t+1}+\cdots+(\gamma\lambda_{gae})^T\delta_T,
\end{equation}
where the TD error $\delta_t=R_t+\gamma Q_{\bar{\omega}_j}(\boldsymbol{o}^{t+1},\boldsymbol{a}^{t+1})-Q_{\bar{\omega}_j}(\boldsymbol{o}^{t},\boldsymbol{a}^{t})$, $Q_{\bar\omega_j}(\boldsymbol{o}^t,\boldsymbol{a}^t)$ is the centralized critic of $LMM_j$, $\gamma$ is the discount factor, and $\lambda_{gae}$ is the decay factor\cite{junlong2023MADRL}.
\begin{figure*}[t]
% \vspace{-0.5cm}
\centering{\includegraphics[width=0.8\textwidth]{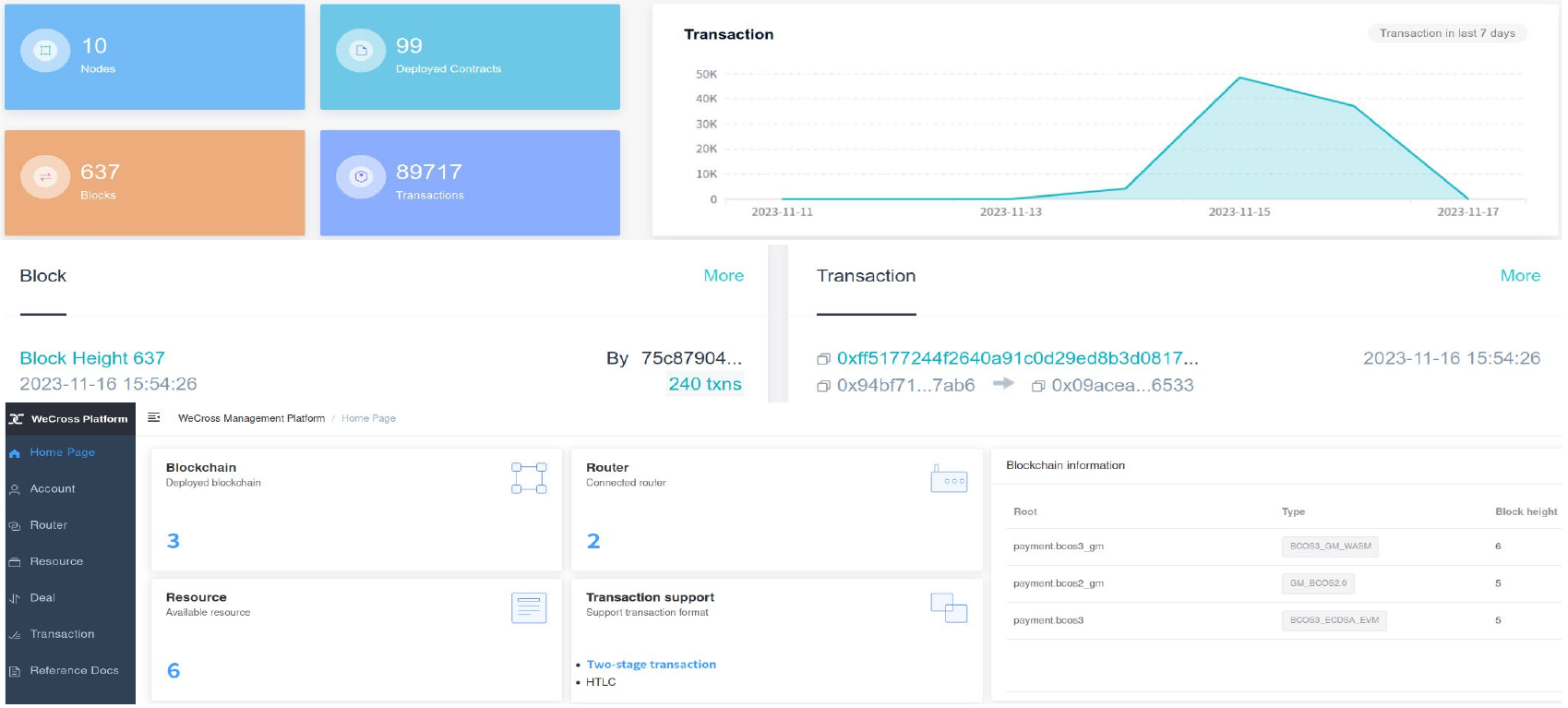}}
\caption{The blockchain system building on FISCO BCOS and WeCross platforms. The figure shows the status of the single-chain and cross-chain system with block and transaction information.}\label{cross-chain_platform}
\vspace{-0.5cm}
\end{figure*}

The complete pseudo-code of the MAPPO algorithm for pseudonym generation is presented in \textbf{Algorithm \ref{algorithm}}. During the collecting process, each agent $LMM_j$ continually interacts with the pseudonym generation environment to select an action with the current policy $\pi_{\theta_j^{(old)}}$ to obtain the trajectories. Then, the Q-function and advantage estimation are calculated by the target critic $Q_{\bar\omega_j}$. After that, the data $\left\{\left\{o_{j}^{t},a_{j}^{t},\hat{Q}_{j}(\boldsymbol{o}^{t},\boldsymbol{a}^{t}),A_{j}(\boldsymbol{o}^{t},\boldsymbol{a}^{t})\right\}_{\mathrm{j}=1}^{J}\right\}_{t=1}^{T}$ will be stored in the replay buffer. During the training process, the optimizer randomly samples experiences from the replay buffer to update the network parameters\cite{zhang2023learning} in each epoch. Then, the network parameters $\theta_j^{(old)}$ and $\overline{\omega}_j$ are updated to $\theta_j$ and $\omega_j$, respectively. For each agent $LMM_j$, the critic parameter $\omega_j$ is updated by minimizing the loss function $L(\omega_j)=(\hat{Q}_j(\boldsymbol{o}^t,\boldsymbol{a}^t)-Q_{\omega_j}(\boldsymbol{o}^t,\boldsymbol{a}^t))^2$. The time complexity of the employed MAPPO algorithm hinges on the multiplication operations within multiple fully-connected deep neural networks \cite{zhang2023learning,zhang2023disturb}. This complexity is denoted by $\mathcal{O}\left(\sum_{f=1}^{F+1} \xi_{f} \xi_{f-1}\right)$, where $\xi_{f}$ signifies the number of neural units in the $f^{th}$ layer and $F$ represents the total number of hidden layers.

\section{Performance Evaluation}
\label{Performance Evaluation}
\subsection{Security Analysis}
The cross-metaverse empowered dual pseudonym management framework has a positive effect on privacy protection, satisfying the following anticipated security requirements.

\begin{enumerate}[1)]
\item \emph{Reliable privacy protection:} The proposed framework ensures \textbf{\textit{anonymity}} in vehicular edge metaverses by allowing both VMUs and VTs quickly acquire pseudonyms in the local metaverse to conceal their true identities\cite{luo2023privacy,xu2021efficient}. The VMUs and VTs can utilize the DoPE to evaluate their current privacy levels and decide whether to change pseudonyms. Even if attackers have eavesdropped a safety message including a VMU pseudonym identity, they cannot link it to the associated VT. This is because when VMU and VT change pseudonyms synchronously, attackers will confuse the identity of the target VMU with that of other VMUs in the same local metaverse~\cite{luo2023privacy}, meaning that the security requirement \textbf{\textit{unlinkability}} is well satisfied.

\item \emph{Data integrity and immutability:}
To improve management efficiency and pseudonym security, we utilize a consortium blockchain and Practical Byzantine Fault Tolerance (PBFT) consensus algorithm in the cross-chain system~\cite{li2020scalable,liu2023reputation}. Given the nature of integrity of hash-based blocks, the subchain in the local metaverse cannot be easily cracked by attackers, thereby guaranteeing pseudonym \textbf{\textit{immutability}}. Furthermore, combined with the notary mechanism, only the authenticated edge servers are permitted to make cross-chain transactions. Hence, only the fully trusted TA, notaries (i.e., LMMs) and verified nodes can access data on the relay chain and other subchains, thus ensuring the \textbf{\textit{conditional traceability}} in vehicular edge metaverses. Moreover, even though a certain subchain in a local metaverse crashes due to a disastrous attack, the relay chain and other subchains can continue to operate because the on-chain data are partially isolated among subchains. This demonstrates the \textbf{\textit{robustness}} of our proposed framework.

\item \emph{Efficient pseudonym management:} 
With the aid of the hierarchical cross-metaverse architecture, the pseudonym management is accelerated by LMMs in the edge layer, significantly reducing the communication delay compared to traditional centralized schemes\cite{kang2017p3}. Moreover, the MADRL algorithm based on edge learning technology enables multiple LMMs to generate pseudonyms swiftly~\cite{xu2023edge}. Therefore, we achieve high \textbf{\textit{Efficiency}} in pseudonym management.
\end{enumerate}

\subsection{Performance Analysis of the Cross-chain Assisted Pseudonym Management Scheme}
\begin{figure}[t]
\vspace{-0.5cm}
\centering{\includegraphics[width=0.33\textwidth]{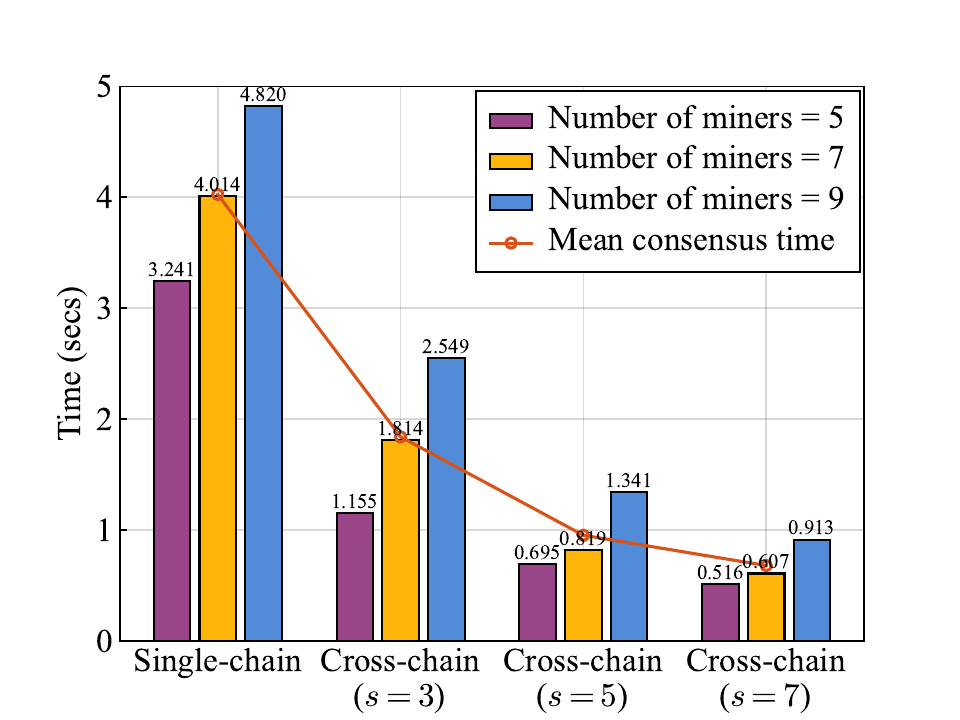}}
\caption{Consensus time comparison between single-chain and cross-chain system.}\label{latency_subchain}
\vspace{-0.5cm}
\end{figure}
To evaluate the proposed cross-chain assisted pseudonym management, we conduct simulation experiments w.r.t. blockchains on the FISCO BCOS platform integrated with a cross-chain platform called WeCross\cite{kangjinbo2023healthcare}, as shown in Fig. \ref{cross-chain_platform}. The simulations are run on an Ubuntu 22.04 system with an Intel Core i7-12700 CPU @2.10 GHz, including 8 GB RAM. By default, the number of pseudonym-related transactions is set to $1000$, and the data size per transaction is set to $1$ KB~\cite{khan2022privacy}. Meanwhile, We employ five subchains in our cross-chain system as the default setting.

Fig. \ref{latency_subchain} shows the consensus time for adding new blocks (e.g., pseudonym registration information) in the single-chain system and the cross-chain system. Focusing on the red solid line, we find that the single-chain system takes the longest mean consensus time to complete $1000$ transactions. In contrast, our cross-chain system with the PBFT consensus significantly reduces the consensus time\cite{feng2023wireless}. As the number of subchains $s$ increases from $3$ to $7$, the average consensus time in our cross-chain system decreases from $1.839$ seconds to $0.679$ seconds, indicating that the consensus efficiency of our cross-chain system is nearly $6\times$ higher than that of the single-chain system when the number of subchains exceeds $7$. Additionally, as the number of miners increases, the time for adding blocks increases because there are more nodes participating in the consensus.

%结论：我们的更稳定，交易数量变多时，共识时间能保持一个较低的值。
Fig. \ref{latency_trans_num} shows the effects of the number of transactions on consensus time. It is evident that as the number of transactions increases, both the single-chain system and cross-chain system undergo incremental block delays. Nonetheless, the consensus time in our cross-chain system increases smoothly, whereas the single-chain system experiences a much sharper increase. Compared with the single-chain scheme, our proposed schemes exhibit a notable reduction in consensus time by $87.973\%$ when processing $2500$ transactions, showcasing their capability to handle high-throughput scenarios of pseudonym management in vehicular edge metaverses.

Table \ref{request_delay} shows a comparison of pseudonym requesting delay, which encompasses cryptographic operation time, communication delay, and blockchain verification time. According to \cite{kang2017p3}, the cryptographic operation time includes the time of asymmetric encryption and decryption, the time of signature generation and verification, and the time of certificate verification, which are set to $1.86$ ms, $0.94$ ms, $0.93$ ms, $1.11$ ms, $5.42$ ms, respectively. The communication delay comprises the delays between the VMU and edge server, edge server and LMM, and edge server and TA, set to $20$ ms, $5$ ms, and $10$ ms, respectively\cite{kang2017p3}. We can see that for local pseudonym distribution (denoted as \textit{local}) in our cross-chain system, all three types of time are lower compared to the single-chain system. This is because the distributed LMMs close to VMUs can reduce operational complexity and communication latency, and there are fewer workers participating in consensus on the subchain, which can improve verification efficiency. Despite the slightly higher verification time for cross-district pseudonym distribution (denoted as \textit{cross-district}), our proposed scheme is still practical. Under acceptable time overhead, the pseudonyms are only recorded on original subchains when VMUs and VTs migrate among local metaverses in our cross-chain scheme, thus partially isolating sensitive data to enhance secure pseudonym management.

\begin{figure}[t]
\vspace{-0.5cm}
\centering{\includegraphics[width=0.33\textwidth]{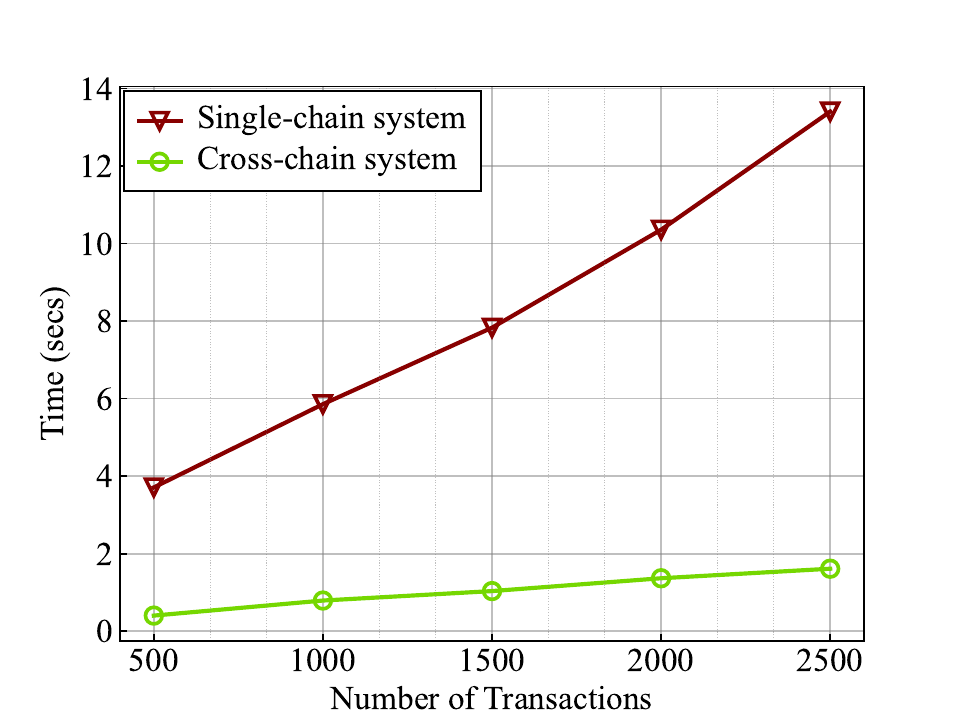}}
\caption{Consensus time corresponding to different numbers of transactions under different system types.}\label{latency_trans_num}

\end{figure}

\begin{table}[t]
\renewcommand{\arraystretch}{1.25} %控制行高
\captionsetup{font=footnotesize} %控制caption的字体大小
\caption{Pseudonym requesting delay (milliseconds) comparison.}
\centering
\resizebox{\columnwidth}{!}{%
\begin{tabular}{|c|c|c|cc|}
\hline
\multirow{2}{*}{\textbf{\begin{tabular}[c]{@{}c@{}}System\\ type\end{tabular}}} & \multirow{2}{*}{\textbf{\begin{tabular}[c]{@{}c@{}}Cryptographic\\ operation time\end{tabular}}} & \multirow{2}{*}{\textbf{\begin{tabular}[c]{@{}c@{}}Communi-\\ cation delay\end{tabular}}} & \multicolumn{2}{c|}{\textbf{\begin{tabular}[c]{@{}c@{}}Blockchain \\ verification time\end{tabular}}} \\ \cline{4-5} 
 &  &  & \multicolumn{1}{c|}{\textit{local}} & \textit{cross-district} \\ \hline
\multicolumn{1}{|l|}{Single-chain} & $19$ & $60$ & \multicolumn{2}{c|}{$28$} \\ \hline
Cross-chain & $7$ & $50$ & \multicolumn{1}{c|}{$21$} & $806$ \\ \hline
\end{tabular}\label{request_delay}%
}
\vspace{-0.5cm}
\end{table}
\subsubsection{Parameter setting}
To evaluate the performance of the MAPPO algorithm for pseudonym generation, we investigate a scenario where multiple LMMs generate pseudonyms to meet the pseudonym demands of VMUs in simulation experiments. Specifically, We consider that there are three LMMs and the number of VMUs in each local metaverse $\mathcal{I}$ is set to $[80, 70, 60]$. We assume that the varying pseudonym demands in each local metaverses follow a Poisson distribution with mean $[80, 90, 100]$ per minute, since generally the fewer VMUs in a district, the more pseudonyms are needed for privacy preservation. For simplicity, we consider that the communication overhead follows a uniform distribution, namely, $c_j^{t}\sim{U[0,0.2]}$. Meanwhile, we set default values $\varepsilon=0.1$, $\beta=0.2$, $\delta=0.5$, and $p_0=1.5$. The length of a time slot $t$ is set to $60$ seconds, the maximum number of generated pseudonyms of each LMM $G_{max}^t$ is set to $120$, and the upper limit rate for registering pseudonyms in the global metaverse $\theta$ is set to $5$ per second. The minimum and maximum number of vehicles in a local metaverse are set to $10$ and $160$, respectively. Regarding the configuration of learning-based algorithms, we set $T=120$, $L=3$, $K=15$, and $B=16$. The learning rate of actor and critic is set to $0.001$. The clip parameter $g_{clip}$ is set to $0.2$. The discount factor $\gamma$ is set to $0.99$, and the decay factor $\lambda_{gae}$ is set to $0.95$. Finally, $\xi_{f}$ and $F$ are set to $64$ and $1$, respectively.

\begin{figure}[t]
\vspace{-0.5cm}
\centering{\includegraphics[width=0.33\textwidth]{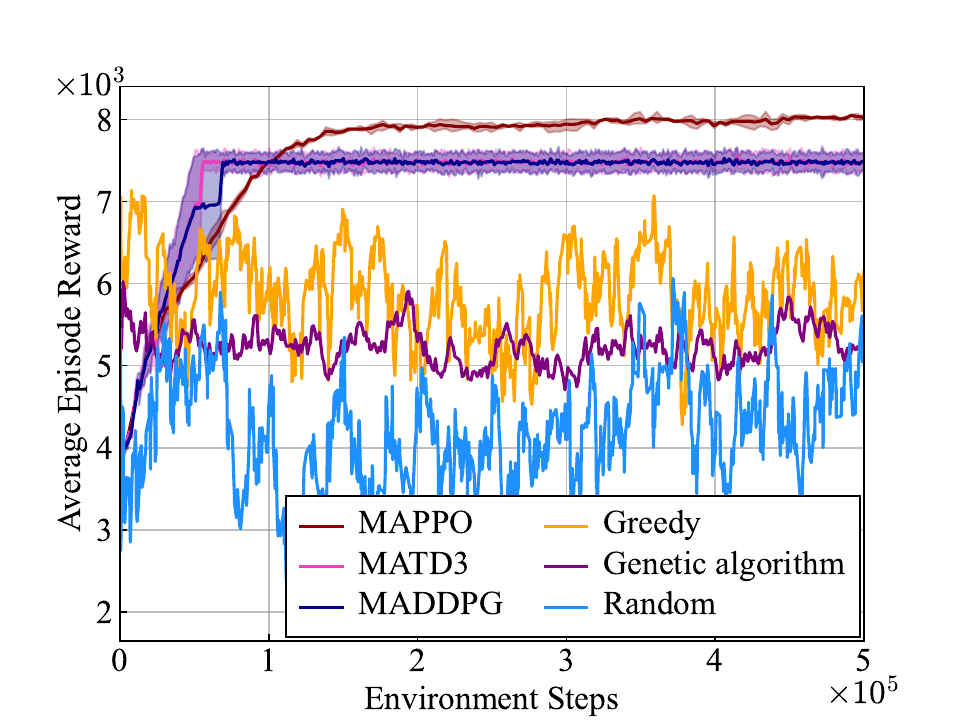}}
\caption{Comparison of average episode reward curves of MAPPO and benchmarks for the pseudonym generation task.}\label{episode_reward}

\end{figure}

\begin{figure}[t]
\centering{\includegraphics[width=0.33\textwidth]{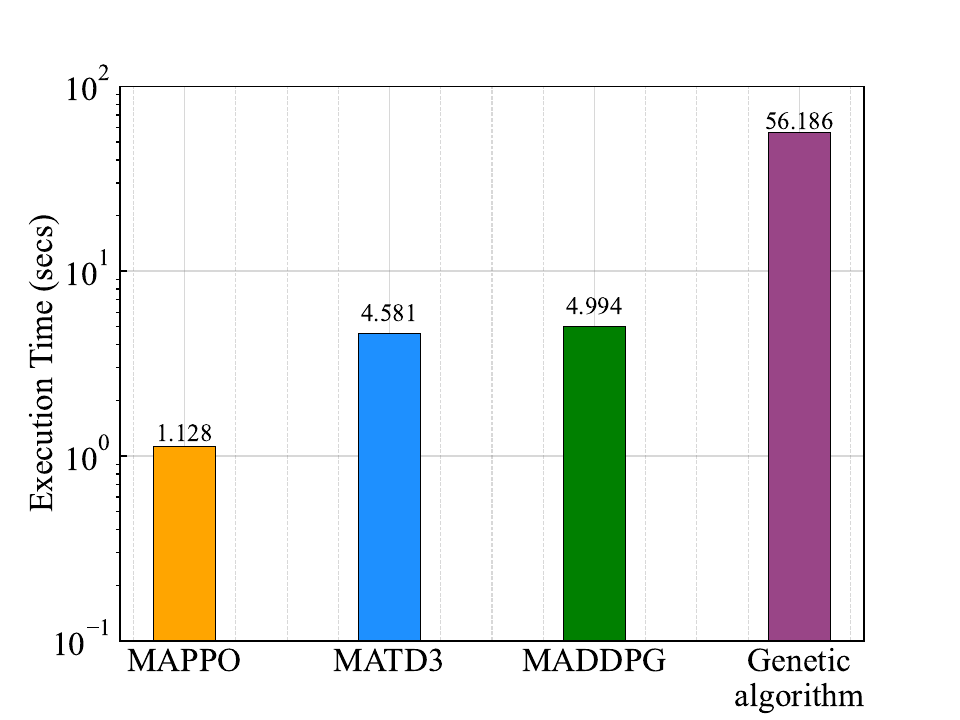}}
\caption{Execution time of obtaining optimal strategy for pseudonym generation under different schemes.}\label{run_time}
\vspace{-0.5cm}
\end{figure}
\subsection{Performance Analysis of the Proposed Pseudonym Generation Method}
\subsubsection{Convergence analysis}
% In the random scheme, the LMMs randomly determine the number of generatedd pseudonyms at each time step, while in the greedy scheme, the LMMs determine the number by selecting the optimal number from the past time steps.
    As shown in Fig. \ref{episode_reward}, we compare the convergence performance of our proposed MAPPO-based scheme with several benchmark approaches, including $\rmnum{1})$ \textit{Multi-agent Deep Deterministic Policy Gradient (MADDPG)}, $\rmnum{2})$ \textit{Multi-agent Twin Delayed Deep Deterministic Policy Gradient (MATD3)}, $\rmnum{3})$ \textit{genetic algorithm}\cite{chaudhary2019pseudonym}, $\rmnum{4})$ \textit{random}, and $\rmnum{5})$ \textit{greedy}. The MADDPG and MATD3 are learning-based algorithms\cite{du2023maddpg}, while the genetic algorithm is a classical heuristic algorithm. In the random scheme, the LMM randomly determines the number of pseudonyms to generate, while in the greedy scheme, the LMM determines the number based on the maximum utility achieved in previous time steps. We can see that the proposed scheme can converge on the maximum reward, outperforming MATD3, MADDPG, genetic algorithm, greedy, and random by $8.7\%$, $8.8\%$, $37.1\%$, $53.3\%$, and $92.9\%$, respectively. Since the pseudonym demands are time-varying, traditional heuristic algorithms fail to reach the convergence value. In Fig. \ref{run_time}, we compare the execution time of each algorithm over $1000$ episodes. The proposed MAPPO-based scheme is found to be four times faster than other learning-based algorithms and nearly $50\times$ faster than the heuristic. Consequently, our MAPPO-based solution requires less training time and performs better, highlighting its competence in pseudonym generation for vehicular edge metaverses.
\begin{figure}[t]
\vspace{-0.5cm}
\centering{\includegraphics[width=0.33\textwidth]{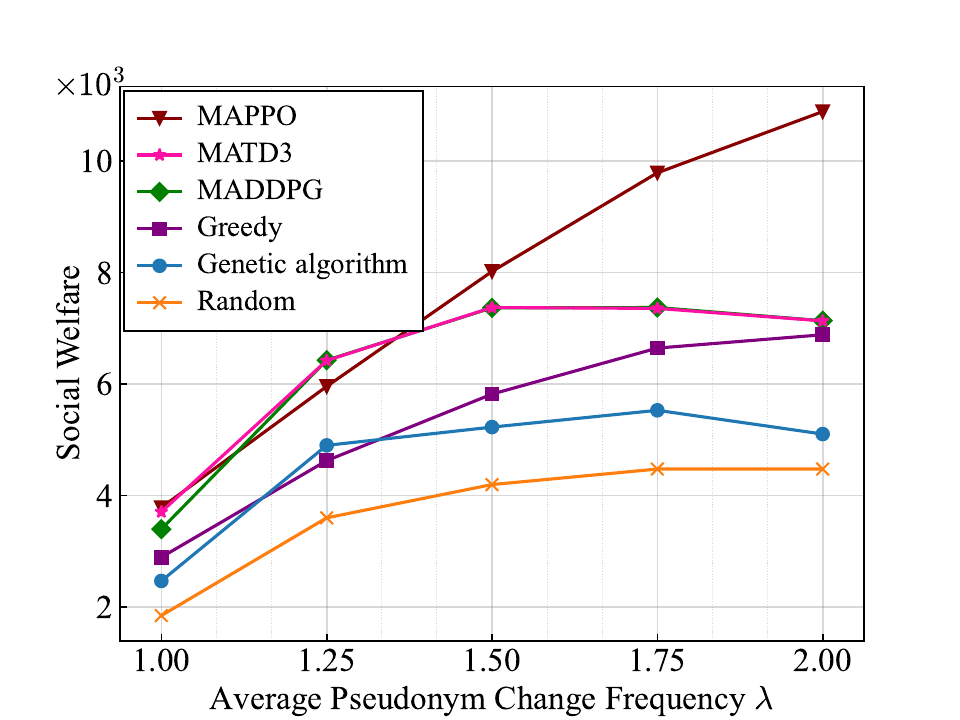}}
\caption{Social welfare under different average pseudonym change frequencies.}\label{SW_lambda}

\end{figure}

\begin{figure}[t]
\centering{\includegraphics[width=0.33\textwidth]{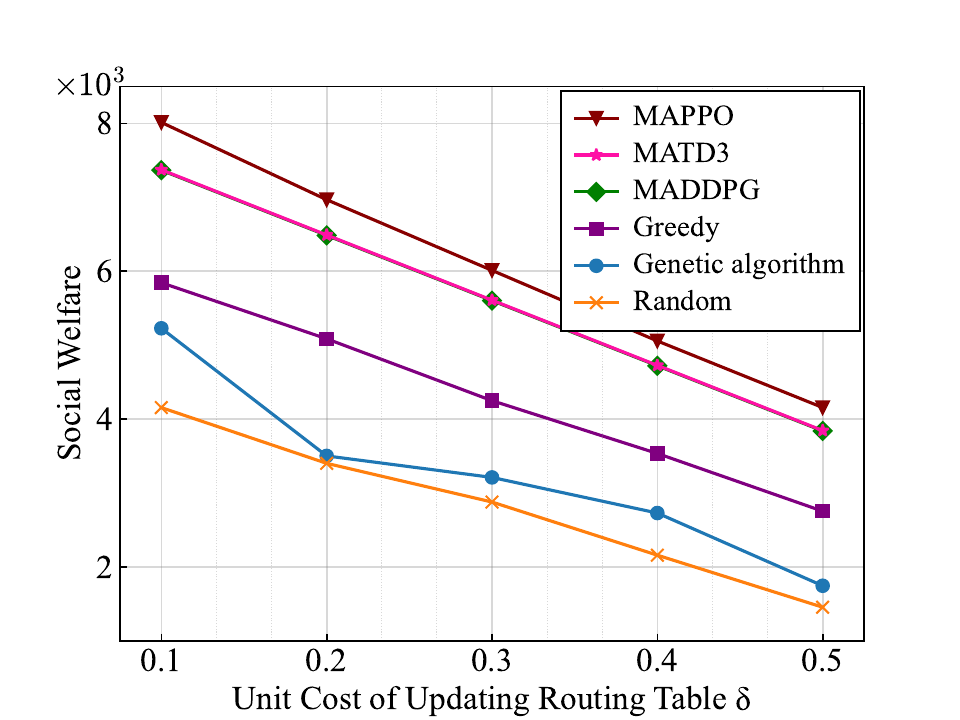}}
    \caption{Social welfare under different unit costs of updating routing table.}\label{SW_delta}
\vspace{-0.5cm}
\end{figure}
\subsubsection{Utility analysis}
Fig. \ref{SW_lambda} shows the impact of pseudonym change frequency on social welfare. From Eq. (\ref{time_average_H}) we know that the time-average DoPE is positively correlated with average pseudonym change frequency $\lambda$. It is observed that our MAPPO-based scheme achieves the highest social welfare under different $\lambda$, except for $1.25$. Furthermore, with the $\lambda$ ranging from $1.75$ to $2$, the social welfare decreases under the methods of MATD3, MADDPG, and the genetic algorithm. This phenomenon implies that these baseline schemes are not well-suited for the complex vehicular edge metaverse with ever-changing pseudonym demands. Therefore, our proposed method exhibits greater feasibility when applied in metaverses.

Fig. \ref{SW_delta} shows the effect of different unit costs of updating routing table on the social welfare under various schemes. On account of the dynamic migrations in vehicular edge metaverses, changing both VMU and VT pseudonyms can incur a heterogeneous cost of updating routing table within different local metaverses\cite{freudiger2009non}. We find that our proposed scheme consistently achieves maximum social welfare under each unit cost of updating routing table $\delta$. Therefore, the proposed schemes exhibit greater adaptability to the fluctuating network condition within vehicular edge metaverses.

\section{Conclusion}
\label{Conclusion}
In this paper, we examined the transformative potential of privacy-preserving pseudonym management in vehicular edge metaverses. Considering the dynamic nature of VMU and VT migrations, we presented a cross-metaverse empowered dual pseudonym management framework, in which the global metaverse consists of multiple local metaverse collaborating for efficient VMU and VT pseudonym management. Then, we integrated the cross-chain technology into our framework, with its decentralized architecture facilitating secure pseudonym distribution and revocation. Furthermore, we proposed an analytical metric named DoPE to assess the degree of privacy protection after pseudonym changes for VMUs and VTs. Combining DoPE with inventory theory, we formulated the optimization problem of pseudonym generation in vehicular edge metaverses. Additionally, due to the ever-changing pseudonym demands within multiple local metaverses, we employed an MADRL algorithm based on edge learning technology to achieve high-efficiency and cost-effective pseudonym generation. Finally, numerical results demonstrated the effectiveness and feasibility of our proposed framework in vehicular edge metaverses. For future work, we will further explore the applications of advanced optimization tools and the proposed metric across various domains in vehicular metaverses, such as pseudonym changes and exchanges, among others.

\bibliographystyle{IEEEtran}
\bibliography{reference}

\end{document}